\newcommand{\Eq}[1]{(\ref{eq:#1})}
\newcommand{\Th}[1]{Th.~\ref{thm:#1}}
\newcommand{\Sec}[1]{\S \ref{sec:#1}}
\newcommand{\Fig}[1]{Fig.~\ref{fig:#1}}
\newcommand{\Tbl}[1]{Table~\ref{tbl:#1}}
\newcommand{\App}[1]{App.~\ref{app:#1}}
\newcommand{\InsertFig}[4]
{\begin{figure}[h!t]
    \centerline{
     \includegraphics[width=#4]{./figures/#1}
    }
    \caption{{\footnotesize #2}
    \label{fig:#3}}
\end{figure}}
\newcommand{\InsertFigTwo}[5] {
\begin{figure}[h!t]
    \centerline{
     \includegraphics[width=#5]{./figures/#1}
     \hskip 0.5in
     \includegraphics[width=#5]{./figures/#2}
    }
    \caption{{\footnotesize #3}
    \label{fig:#4}}
\end{figure}}
\newcommand{\bR}{{\mathbb{ R}}}
\newcommand{\bT}{{\mathbb{ T}}}
\newcommand{\bZ}{{\mathbb{ Z}}}
\newcommand{\bS}{{\mathbb{ S}}}
\newcommand{\cA}{{\cal A}}
\newcommand{\cC}{{\cal C}}
\newcommand{\cO}{{\cal O}}
\newcommand{\cR}{{\cal R}}
\newcommand{\cU}{{\cal U}}
\newcommand{\cV}{{\cal V}}
\newcommand{\eps}{\varepsilon}
\newcommand{\Fix}[1]{\mathop{\rm Fix}\nolimits({#1})}
\newcommand{\TwoVec}[2]{{\begin{pmatrix}{#1} \\ {#2}\end{pmatrix}}}
\newtheorem{thm}{Theorem}
\newtheorem{lem}[thm]{Lemma}
\newcommand{\beq}[1]{\begin{equation}\label{eq:#1}}
\newcommand{\eeq}{\end{equation}}
\newenvironment{se}[1]{\equation\label{eq:#1}\aligned}{\endaligned\endequation}
\newcommand{\bsplit}[1]{\begin{se}{#1}}
\newcommand{\esplit}{\end{se}}
\newcommand{\thd}[1] {\multicolumn{1}{c}{#1}}
\newcommand{\thdBar}[1]{\multicolumn{1}{c|}{#1}}
\newenvironment{example}[1][]
 {
	\setlength \leftmargini {1.0em}		
	\setlength \topsep {0.5em}			
	\begin{quote}
	{\it Example#1} }
	{\end{quote}
 }
\newcommand{\bexam}[1][:]{\begin{example}[#1]}
\newcommand{\eexam}{\end{example}}
\title{Mixed Dynamics in a Parabolic Standard Map}
\author{
 \begin{tabular}{cc}
 	L.~M.~Lerman\thanks
 	{
	LML was supported in part by RFBR (grant 14-01-00344a), the Russian Ministry of Science
	and Education (project 1.1410.2014/K, target part) and the Russian Science Foundation
  (project 14-41-00044). Useful conversations with V. Grines are gratefully acknowledged. Both
	authors acknowledge a support from CRDF (grant FSAX-14-60273-0).
  }
		 & J.~D.~Meiss\thanks
   {
    JDM was supported in part by NSF grant DMS-1211350.
    The suggestion of Robert Easton for the construction in
    \App{UnBounded} is gratefully acknowledged.
   }
 \\
Faculty of Mathematics \& Mechanics 	
		&	Department of Applied Mathematics\\
  Lobachevsky State University of Nizhny Novgorod,
  		&	University of Colorado \\
	Nizhny Novgorod, 603950 Russia
		&	Boulder, CO 80309-0526 \\
	lermanl@mm.unn.ru
		&	James.Meiss@colorado.edu\\
\end{tabular}  \\
}
\date{\today}
\begin{document}
\maketitle

\begin{abstract}
\vspace*{1ex}
\noindent

We use numerical and analytical tools to demonstrate arguments in favor of the
existence of a family of smooth, symplectic diffeomorphisms of the two-dimensional torus
that have both a positive measure set with positive Lyapunov exponent and a positive measure
set with zero Lyapunov exponent. The family we study is the unfolding of an almost-hyperbolic
diffeomorphism on the boundary of the set of Anosov diffeomorphisms, proposed by Lewowicz.
\end{abstract}

\section{Introduction}\label{sec:Introduction}
 Due to an extremely complicated intermixture of regular and chaotic orbits, the problem of
the orbit structure of a generic, smooth symplectic map remains mainly open, even for the two-dimensional case.
When the map is sufficiently smooth, its phase space typically exhibits both
regular dynamics due to invariant KAM curves (for instance, in the neighborhood of elliptic
periodic orbits) and seas of chaotic orbits (which numerical investigations indicate can be
densely covered by a single orbit). Moreover, such structures are observed---again in
numerical simulations---to occur at all scales. All this is well known and shown in many 
papers, for a
review see, e.g., \cite{Meiss92}. It is generally agreed that no tools currently exist that
allow one to rigorously elucidate the main points of this observed picture \cite{Sinai95}. Of
course, selected parts of this landscape can be explained; for example, KAM theory provides a
proof of the existence of invariant curves near generic elliptic periodic points. However
even for this case, there is essentially no rigorous characterization of the orbit behavior in the so-called chaotic zones---as depicted in Arnold's famous sketch \cite{Arnold63, Berry78}.

There has been much study of the destruction of invariant curves, and the resulting transition
from regular (quasiperiodic) to irregular (chaotic) behavior, in parameterized families of
area-preserving maps. Since a smooth invariant curve is not isolated, its destruction is
caused
by a loss of smoothness and, at least for twist maps, to the formation of a new, quasiperiodic
invariant Cantor set: an Aubry-Mather set \cite{Aubry83b,Mather82}. In many
families, one observes the ultimate destruction of all the invariant circles (of a given
homotopy class), and this leads to the study of the ``last" invariant curve, and the
development of Greene's residue criterion and renormalization theory \cite{MacKay93}.

At the opposite extreme, the ergodicity and hyperbolicity properties of Anosov diffeomorphisms
are well-understood \cite{Franks70}. This extreme of uniformly hyperbolicity can be thought
of as a complementary limit to integrability: the study of perturbations from ``anti-
integrability" was initiated in \cite{Aubry90}. Aubry's results are based on the consideration
of infinitely-degenerate diffeomorphisms and provide proofs of the existence of horseshoes;
however, they do not lead to proofs of a positive measure of chaotic orbits.

There have been attempts to understand the dynamics of symplectic diffeomorphisms on the torus
beyond the boundary of the Anosov maps \cite{Przytycki82}. Przytycki proved the existence of a
curve of diffeomorphisms that cross the Anosov boundary such that, outside the boundary, there
is a domain on the torus bounded by a heteroclinic cycle formed by merged separatrices of two
saddles that contains a generic elliptic fixed point. The remaining set of positive measure
has a nonhyperbolic structure and positive Lyapunov exponent. The drawback of this example is in
its infinite codimension in the space of smooth symplectic diffeomorphisms with $C^5$-topology:
the merging of separatrices of saddles is a codimension-infinity phenomenon. Przytycki's family
unfolds a smooth, almost-hyperbolic symplectic diffeomorphism of the torus proposed earlier by
Lewowicz \cite{Lewowicz80}. This diffeomorphism is a K-system that has positive Lyapunov
exponent \cite{Enrich01}.

This same trick (with the same drawbacks) was used later in \cite{Liverani04} to
construct symplectic diffeomorphisms arbitrarily close to Lewowicz's almost-hyperbolic map.
Smooth symplectic, transitive diffeomorphisms that are K-systems on closed two-dimensional manifolds
other than tori were constructed in \cite{Katok79} (see also, \cite{Gerber82}). Again it is
not clear how these results can be used to understand the orbit structure for a generic
diffeomorphism.

Following \cite{Lerman10}, we study the map
$f: \bT^2 \to \bT^2$, where $\bT = \bR / \bZ$, defined through
\beq{StdMap}
	f(x,y) = (x+y+g(x), y+g(x)) \mod 1.
\eeq
If the ``force" $g$ were a degree-zero circle map, then \Eq{StdMap} would be a generalized
Chirikov standard map \cite{Meiss92}. Instead, we assume that $g$ is a degree-one, circle map:
\beq{CircleMap}
	g(x+1) = g(x) + 1 .
\eeq
When $g$ is a monotone increasing diffeomorphism, \Eq{StdMap} is Anosov: every orbit is
uniformly hyperbolic and $f$ is topologically conjugate
to Arnold's cat map $a: \bT^2 \to \bT^2$,
\beq{Anosov}
	a(x,y)= A \begin{pmatrix} x \\ y \end{pmatrix} \mod 1,\quad \mbox{where}\quad
 	A=\begin{pmatrix}2&1\\1&1 \end{pmatrix} .
\eeq
More generally, Franks showed that \Eq{StdMap} with \Eq{CircleMap} is semi-conjugate to $a$
\cite{Franks70}, i.e., there is a continuous,
onto map $k:\bT^2 \to \bT^2$ such that
\beq{SemiConjugacy}
	k \circ f = a\circ k.
\eeq
The map $k$ depends continuously on $g$, and when $g$ is strictly monotone, $k$ is a
homeomorphism, implying---as mentioned above---that $f$ is then conjugate $a$.

In \cite{Lerman10}, the first author made an attempt to elucidate the features of
\Eq{StdMap} when the circle map $g$ acquires a critical fixed point,
\beq{Parabolic}
	g(x_p) = Dg(x_p) = 0 .
\eeq
In this case \Eq{StdMap} has a parabolic fixed point $p = (x_p,0)$ and is no longer Anosov.
The main
result of \cite{Lerman10} was to show that the diffeomorphism acquires elliptic behavior when
it
crosses this Anosov boundary. Another feature of this map is the separation of the phase space
into
two regions, one in which the dynamics is nonhyperbolic and the other in which the
diffeomorphism
appears to be nonuniformly hyperbolic. Though neither of these statements were proved in
\cite{Lerman10}, considerations in favor of these statements were presented.

In this paper we try to use numerical methods to substantiate the following assertions about
\Eq{StdMap} under the assumptions \Eq{CircleMap} and
\Eq{Parabolic}.
\begin{itemize}
	\item There is an invariant, open region $E \subset \bT^2$ whose boundary is
	formed from the stable and unstable manifolds of two fixed points of the map,
	a hyperbolic saddle, $h$, and a parabolic point, $p$. The Lebesgue measure of $E$
	is strictly less than that of $\bT^2$.
		
	\item The channel $E$ contains all non-hyperbolic orbits of $f$, and indeed has
	elliptic orbits for generic $\eps>0$.
	
	\item Conversely, the dynamics of $f|_H$, where $H=\bT^2\setminus E$, is
  nonuniformly hyperbolic; that is, the map is ergodic in $H$ and has positive
	Lyapunov exponent.

\end{itemize}
Of course, these statements are purely numerical observations, which should therefore be
considered mathematically as conjectures.

\section{A Parabolic Standard Map}\label{sec:Parabolic}

Following \cite{Lewowicz80, Lerman10}, we study the dynamics of \Eq{StdMap} using the
degree-one circle map
\beq{Force}
	g(x) = x +\frac{1}{2\pi}\left[\mu - (1+\eps) \sin(2\pi x)\right] ,
\eeq
where $\mu$ and $\eps \ge 0$, see \Fig{gplot}. Note that when $\eps = -1$ and $\mu = 0$, the
map \Eq{StdMap} reduces
to Arnold's cat map \Eq{Anosov}.

\InsertFig{gplot}{Graph of the force $g$ \Eq{Force}, for $\eps = 0.5$ and $\mu=\mu_p(\eps)
\approx 0.27696$ from \Eq{Mu_p}, with the parabolic
point $x_p \approx 0.13386$ and saddle $x_h \approx -0.27889$.}{gplot}{2.5in}

The map $f$ is a diffeomorphism whenever $g$ is smooth. Indeed
\[
	f^{-1}(x,y) = (x-y, y-g(x-y)) .
\]
Moreover, this map is reversible, $f \circ S = S \circ f^{-1}$, with the ``second" reversor of
Chirikov's map (It does not have the first reversor since $g$ is not odd when $\mu \neq 0$.),
\beq{Reversor}
	S(x,y) = (x-y,-y) ,
\eeq
with the fixed set $\Fix{S} = \{(s,0): s \in \bS\}$. Note that since $S$ is
an involution, the map
\[
	f \circ S(x,y)= S\circ f^{-1}(x,y)= (x-2y+g(x-y), -y+g(x-y))
\]
is also a reversor, with the fixed set
\[
	\Fix{f\circ S} = \{s +\tfrac12g(s),\tfrac12g(s): s \in \bS\} .
\]

Under the assumption \Eq{Parabolic}, $g(x-x_p) = \cO((x-x_p)^2)$, the map \Eq{StdMap} has a
(symmetric) parabolic fixed point $p = (x_p,0)$.
For the case \Eq{Force} this fixed point occurs at
\beq{ParabolicFixedPt}
	x_p = \frac{1}{2\pi}\sec^{-1}(1+\eps)
	  = \frac{1}{\pi}\sqrt{\frac{\eps}{2}}\left(1 - \frac{5}{12}\eps + \cO(\eps^2)\right),
\\
\eeq
when $\mu$ is chosen to be
\bsplit{Mu_p}
	\mu = \mu_p(\eps) &\equiv (1+\eps)\sin(2\pi x_p) - 2\pi x_p
	  = \sqrt{\eps(2+\eps)}- \sec^{-1}(1+\eps) \\
	  &= \frac{\sqrt{8}}{3}\eps^{3/2} \left(1-\frac{9}{20}\eps + \cO(\eps^2)\right) .
\esplit
Note that since $Dg(x_p) = 0$, the Jacobian
\beq{Jacobian}
	Df = \begin{pmatrix} 1+Dg(x) & 1 \\ Dg(x) & 1 \end{pmatrix}
\eeq
at $x_p$ has a double eigenvalue $1$ with a nontrivial Jordan block. Moreover, as was shown in
\cite{Lerman10}, the first nonzero coefficient
in the nonlinear normal form is quadratic in $x-x_p$.

We will fix $\mu = \mu_p(\eps)$ using \Eq{Mu_p} and think of $f$ as a one-parameter family $f_
\eps$. This family also has a (symmetric) hyperbolic saddle fixed point $h = (x_h,0)$ where
$x_h < 0$ is the negative root of $g(x)$,
\beq{SaddleFixedPt}
	x_h = -\frac{\sqrt{2\eps}}{\pi}\left(1 - \frac{19}{60}\eps + \cO(\eps^2)\right).
\eeq
This is a hyperbolic fixed point of $f$ since $Dg(x_h) > 0$ whenever $\eps >0$; indeed its
multipliers are
\[
	\lambda_{\pm} = 1 \pm \sqrt{3\eps} +\frac{3}{2}\eps + \cO(\eps^{3/2}),
\]
The pair of fixed points $(x_p,0)$ and $(x_h,0)$ are born from a degenerate saddle at the
origin for $\eps = \mu = 0$.

Additional fixed points may exist when $\eps$ is large enough so that $g(x)$ is a nonzero
integer for some $x \in [-\tfrac12, \tfrac12]$.
For example a saddle-center bifurcation creates a pair of fixed points near $(x,y)
=(-0.2151,0)$ when $\eps\approx 3.603$. However,
we will study the case that $\eps$ is much smaller than this so that $f$ has only two fixed
points.

\section{Invariant Manifolds and the Channel}\label{sec:InvariantManifolds}

The region $E$ mentioned in \Sec{Introduction} is an open domain that is bounded by segments
of the stable and unstable manifolds of the parabolic and saddle fixed points $p$ and $h$. In
a sense, the construction we follow is similar that of a ``DA-map" on the torus
\cite{Smale67}. In that case, the unstable manifold of a saddle of an Anosov map is blown-up
to make a \emph{channel} $E$ in such a way that it contains the original unstable fixed point,
and its boundaries become unstable manifolds of two saddles. However, since \Eq{StdMap} is
symplectic we cannot blow up a single manifold; instead, the curve to be blown-up corresponds
to the right going halves of the unstable $W^u(O)$ and stable $W^s(O)$ manifolds of the
degenerate saddle $O = (0,0)$ of
the almost-hyperbolic diffeomorphism $f_0$ at $\eps = \mu = 0$. Near the degenerate saddle,
these two right-going curves form a cusp-shaped separatrix. Since $f_0$ is conjugate to an
Anosov map, these manifolds are dense on the torus, and they intersect on a dense set of
homoclinic points. Moreover, at every intersection point (except for the degenerate saddle
itself), the manifolds cross transversely \cite{Lewowicz80}. As a consequence, if we thicken 
these two curves we
must get a Cantor-like set on the torus. Our simulations support these conclusions, though the
situation is more complex. To avoid these complications, we do not attempt this blow-up
construction, but instead use a bifurcation technique.

As noted in \Sec{Parabolic}, when $\eps >0$ the family $f_\eps$ has two fixed points: a
hyperbolic saddle $h$ and a parabolic point $p$. Each fixed point has smooth, stable,
$W^s$, and unstable, $W^u$, manifolds; numerical approximations are shown in
\Fig{ParabolicManifolds}.
The existence of the manifolds of the saddle follows from the standard stable manifold theorem
\cite{Hirsch77}.
For the parabolic point, existence follows from the results of Fontich \cite{Fontich99}. In
contrast to the saddle, the curves $W^{s,u}(p)$ are one-sided and they start as a cusp of the
form
\[
	W^{s,u}(p) = \{ (x, \alpha (x-x_p)^{3/2}) + \cO(\eps): x \ge x_p\}
\]
with $\alpha = \mp \sqrt{\frac{2\pi}{3}} [\eps(2+\eps)]^{1/4}$, see \App{ParabolicManifolds}.
As will be discussed in \Sec{Slope}, the manifolds of the parabolic point are asymptotic to
the right-going branches of the manifolds of the hyperbolic point; this is seen numerically
when the manifolds are extended, as shown in \Fig{NumericalMan}.

\InsertFig{ParabolicManifolds}{Stable (brown) and unstable manifolds (green) for the parabolic
fixed point $p =(x_p,0)$ and
for the saddle $h = (x_h,0)$ (blue and red, respectively) for the map \Eq{StdMap} with
\Eq{Force}, $\eps = 0.1$, and
$\mu = \mu_p(\eps) \approx 0.029558$. Also shown (dashed curves) is a $10^{th}$ degree
polynomial approximation \Eq{ParabolicApprox}
for the parabolic manifolds. The gray region is the channel $E$.}{ParabolicManifolds}{3in}

Since the right-going unstable manifolds of $p$ and $h$ are born out of the degenerate
manifold $W^u_0$, one can prove that they depend continuously on $\eps$;
moreover, any finite segment of these manifolds detached from the fixed point can be proven to
vary continuously in the $C^1$-topology.
In particular, if $T$ is any segment transverse to the
local unstable manifold of the degenerate saddle, then the local manifolds $W^u(h)$ and
$W^u(p)$ that emerge will continue to transversely intersect $T$ when $\eps$ is small enough.

Since the map has the reversor $S$ \Eq{Reversor}, and the points $h$ and $p$ are symmetric,
their corresponding stable manifolds are obtained by applying $S$ to the unstable manifolds,
and the same continuity and transversality conditions apply.

\InsertFigTwo{Hyperbolic}{Parabolic}{Numerically computed stable and unstable manifolds for
the saddle $(x_h,0)$ (left) and
the parabolic point $(x_p,0)$ (right) for the map \Eq{StdMap} with \Eq{Force}, $\eps = 0.1$,
and $\mu = \mu_p(\eps)$.}{NumericalMan}{3in}

The boundary of the initial part of the upper half, $E^+$, 
of the channel $E$ can be constructed by
following the right-going manifold $W^u(h)$ from $h$ until it first intersects the transverse
segment $T = \{(\tfrac12, y): y\in [0,\tfrac12)\}$. For small enough $\eps$, this segment is
transverse $W^u(p)$ as well, so follow $T$ to $T \cap W^u(p)$ and then continue along the
parabolic manifold back to $p$. Connecting $p$ to $h$ along the $x$-axis, closes the boundary
of the initial part of $E^+$. A similar construction for the stable right-going manifolds
using $T' = \{(x,-\tfrac12): x\in [0,\tfrac12]\}$, gives the lower half, $E^-$. The union is
shown in gray in \Fig{ParabolicManifolds}. The remainder of the channel is then obtained by
forward and backward iteration.

One has to remark that the behavior of manifolds for $h$ and $p$ is considerably more complex
than that for $\eps=0$---especially because of the formation of folds. This is more easily seen for larger
values of $\eps$: \Fig{FourManifoldsE05} shows the manifolds for $\eps= 0.5$. The size and
number of these folds on any finite segments grows with $\eps$.

\InsertFig{FourManifoldsE05}{Numerically computed stable and unstable manifolds for the saddle
(blue and red) and parabolic (brown and green)
fixed points for \Eq{StdMap} with \Eq{Force}, $\eps = 0.5$, and $\mu = \mu_p(\eps)$.
The manifolds for the saddle $h$ are mostly obscured because they were plotted before those of
$p$.}{FourManifoldsE05}{3in}

These folds are caused by homoclinic intersections. To see this, choose a segment of $W^u(O)$
for the almost hyperbolic diffeomorphism $f_0$ that is long enough to transversely intersect
$W^s(O)$ at a primary homoclinic point $q_0$.
Since the perturbed manifolds of $p$ and $h$ are $C^1$-close to the degenerate manifolds when
$\eps >0$, the unstable manifolds will transversely intersect both of the stable channel
boundaries near $q_0$.
This gives rise to a rectangle $R$, a portion of the channel that is
re-injected into itself, see \Fig{ChannelFolds}. Two of the corners of $R$ are homoclinic to
$p$ and $h$, and two are heteroclinic from/to $p$ and $h$.
Since iteration causes points to move monotonically along the stable manifolds, the corners on
$W^s(h)$ monotonically tend to $h$, and the two on $W^s(p)$ monotonically approach $p$.
However, the interiors of the sides of the rectangle formed from the unstable manifolds are
inside $E$, so they must move along the channel and tend to $W^u(h)$. Therefore, these
segments lengthen along $W^u(h)$, but have endpoints tied to $W^s(p)$. This makes them form a
fold, like that in \Fig{ChannelFolds}.

For each such fold in the unstable manifolds, there is a corresponding fold in the stable
manifolds obtained by applying the reversor \Eq{Reversor}. Suppose two such symmetric folds
intersect. If one decreases $\eps \to +0$, these two symmetric folds tend to the related
points on the unstable and stable separatrices of the degenerate saddle. A consequence is that
these two pieces of the stable and unstable manifolds of the saddle have to touch each other
for some value of $\eps>0$ on some point of the symmetry line $y=0$. This gives a homoclinic
tangency. Near tangencies can be seen in \Fig{FourManifoldsE05}. This story requires a proof,
of course, but shows one mechanism for the creation of a tangency.

\InsertFig{ChannelFolds}{Channel for $\eps=0.5$ and $\mu_p(0.5)$. Shown are segments of the
stable and unstable manifolds of $h$ and $p$ as in \Fig{ParabolicManifolds}. The unstable
channel, $E^+$ is shaded gray, and a fundamental domain, $D$, is blue. The curves show the
first re-injection of the unstable manifolds into the channel and subsequent
formation of a fold. The
unstable manifolds are clipped when they first cross the stable boundary of the channel $E$.
The transversal crossing creates a rectangle $R$ (dark gray) whose corner points are
heteroclinic from/to $h$ or $p$. A primary homoclinic point $q$ is labeled. Nine images of $R$ are shown. Upon iteration the unstable sides of the rectangle begin to fold due to their accumulation onto $W^u(h)$.}{ChannelFolds}{3.5in}

\section{Elliptic Dynamics}\label{sec:EllipticDynamics}

It is known that the quadratic homoclinic tangencies (as can be seen in 
\Fig{FourManifoldsE05}) and the resulting Newhouse phenomena \cite{NewHouse77, Duarte08, 
GTSh07} create elliptic orbits. However, there are no visible elliptic orbits in the 
numerical computations when $\eps$ is small. As is seen in \Fig{TwoManifoldsE0379}, when $\eps
$ is not too large the map appears numerically to be ergodic in the sense that a single orbit 
lands in every pixel in a computer generated image. The first visible loss of ergodicity on 
this scale is due to a period-five, saddle-center bifurcation at $\eps \approx 1.159$, which 
creates a chain of five elliptic islands inside the channel. These islands undergo the usual 
sequence of area-preserving resonant bifurcations, before being destroyed by period-doubling 
near $\eps = 1.274$.

\InsertFig{TwoManifoldsE0379}{Numerically computed stable (blue) and unstable (red) manifolds the hyperbolic point
$h$ of \Eq{StdMap} with \Eq{Force}, $\eps = 0.37970035$, and $\mu = \mu_p(\eps)$. The background (yellow) shows
$10^7$ iterates of $(0,0)$; these cover every pixel in the $950\times950$ pixel image. Nevertheless, there is
an elliptic period-$12$ orbit in the channel; it is covered by the black boxes (the boxes are much larger than
the islands!). An enlargement of one of these islands is in \Fig{Period12}.}
{TwoManifoldsE0379}{3in}

For smaller $\eps$, the islands are too small---in size and/or interval of stability---to be easily observed numerically.
Indeed, for $0.01 \le \eps \le 1.15$ in steps of $\Delta \eps = 0.01$, we observe no islands in the phase space whose size is larger than $10^{-4}$.


Nevertheless, such orbits can be found using the reversibility of the map
and thus exploiting its geometric properties. To this end,
recall an assertion from the theory of reversible maps \cite{Lamb98}:

\begin{thm}[Devaney, 1976]
Suppose that $\cO(z)$ is a symmetric periodic orbit of a reversible diffeomorphism
$f$ with a reversing involution $S$ having a smooth submanifold of fixed points $\Fix{S}$.
Then:
\begin{itemize}
\item $\cO(z)$ has period $2p$ if and only if there
exists $\zeta \in \cO(z)$ such that $\zeta \in \Fix{S}\cap f^p(\Fix{S})$ or $\zeta \in \Fix{f
\circ S}\cap f^p(\Fix{f\circ S});$

\item $\cO(z)$ has period $2p+1$ if and only if there
exists $\zeta \in \cO(z)$ such that $\zeta \in \Fix{S}\cap f^p(\Fix{f\circ S})$.
\end{itemize}
\end{thm}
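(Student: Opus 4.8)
The plan is to reduce the statement to a purely combinatorial fact about how the reversors permute a finite symmetric orbit, after recording some elementary reversor algebra. First I would note that the hypothesis $f\circ S = S\circ f^{-1}$ together with $S^2=\mathrm{id}$ gives $S\circ f^k = f^{-k}\circ S$ for all $k\in\bZ$, and hence that each map $S_j := f^j\circ S$ is again an involution ($S_j^2 = f^jSf^jS = f^jf^{-j}S^2 = \mathrm{id}$) that reverses $f$. The cases $S_0=S$ and $S_1 = f\circ S$ are the two reversors in the statement. A one-line computation shows $f(\Fix{S_j}) = \Fix{S_{j+2}}$, so that $f^p(\Fix{S}) = \Fix{S_{2p}}$ and $f^p(\Fix{f\circ S}) = \Fix{S_{2p+1}}$. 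Thus the three sets appearing in the theorem are exactly the pairwise intersections of symmetry sets $\Fix{S_0}\cap\Fix{S_{2p}}$, $\Fix{S_1}\cap\Fix{S_{2p+1}}$, and $\Fix{S_0}\cap\Fix{S_{2p+1}}$. I would stress that the smoothness of $\Fix{S}$ plays no role in the equivalence itself; it matters only for the geometric application, where the orbit is located as a transverse intersection of the curves $\Fix{S}$ and $f^p(\Fix S)$.

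For the ``if'' direction I would use the identity $S_i\circ S_j = f^{\,i-j}$, which follows from $Sf^j = f^{-j}S$. Hence any $\zeta$ lying in $\Fix{S_i}\cap\Fix{S_j}$ satisfies $f^{\,i-j}\zeta = S_iS_j\zeta = \zeta$. Applying this to the three intersections above yields $f^{2p}\zeta=\zeta$ in the first two cases and $f^{2p+1}\zeta=\zeta$ in the third, so $\zeta$ --- and therefore $\cO(z)$ --- has the stated period. Here I read ``$\cO(z)$ has period $2p$'' as ``$2p$ is a period of the orbit''.

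The ``only if'' direction is the substantive one, and the hard part is the book-keeping. Since $\cO(z)$ is symmetric, $S$ maps it onto itself, so $Sz = f^m z$ for some $m\in\{0,\dots,n-1\}$, where $n$ is the period. Using $Sf^k = f^{-k}S$ I would compute the action of $S$ on the orbit index, $S(f^k z) = f^{m-k}z$, so that $f^k z\in\Fix{S}$ exactly when $2k\equiv m \pmod{n}$, and likewise $f^k z\in\Fix{f\circ S}$ exactly when $2k\equiv m+1 \pmod{n}$. The proof then splits on parity. If $n=2p$ is even, precisely one of $m,m+1$ is even; whichever it is, the corresponding congruence has two solutions differing by $p$, producing orbit points $\xi$ and $f^p\xi$ both lying in $\Fix S$ (if $m$ is even) or both in $\Fix{f\circ S}$ (if $m$ is odd), which is exactly the asserted membership in $\Fix{S}\cap f^p(\Fix S)$ or $\Fix{f\circ S}\cap f^p(\Fix{f\circ S})$. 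If $n=2p+1$ is odd, $2$ is invertible modulo $n$, so each congruence has a unique solution; taking $\zeta = f^{k_0}z\in\Fix S$ with $2k_0\equiv m$, I would verify $2(k_0-p)\equiv m+1 \pmod{2p+1}$ --- this is where the identity $-2p\equiv 1 \pmod{2p+1}$ enters --- so that $f^{-p}\zeta\in\Fix{f\circ S}$, i.e. $\zeta\in\Fix S\cap f^p(\Fix{f\circ S})$. I expect the only real friction to be tracking which reversor is active in each parity case and confirming that the odd case lands in the mixed intersection rather than a single-reversor one.
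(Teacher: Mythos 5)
The paper never proves this theorem: it is quoted as a classical result of Devaney (cited via the Lamb--Roberts survey), and the proof the paper does supply, in \App{creation_elliptic}, concerns the \emph{subsequent} claim---that a transverse intersection of $\Fix{S}$ with $f^p(\Fix{S})$ yields an elliptic or hyperbolic orbit while a quadratic tangency yields a parabolic one. So there is no in-paper argument to match, and your proposal supplies the missing proof with what is essentially the standard argument from the reversible-dynamics literature. Your reversor algebra is correct: each $S_j=f^j\circ S$ is an involutory reversor, $f(\Fix{S_j})=\Fix{S_{j+2}}$, so the three sets in the statement are $\Fix{S_0}\cap\Fix{S_{2p}}$, $\Fix{S_1}\cap\Fix{S_{2p+1}}$, and $\Fix{S_0}\cap\Fix{S_{2p+1}}$, and the identity $S_i\circ S_j=f^{\,i-j}$ disposes of the ``if'' direction in one line. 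The ``only if'' bookkeeping---writing $Sz=f^mz$, deducing $f^kz\in\Fix{S_j}$ iff $2k\equiv m+j \pmod n$, then splitting on parity---is the right mechanism and is executed correctly.

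One point needs tightening. You adopt the reading ``$2p$ is \emph{a} period'' (rightly so: under the minimal-period reading the ``if'' direction is false, since a symmetric fixed point lies in $\Fix{S}\cap f^p(\Fix{S})$ for every $p$), but your ``only if'' argument then takes the minimal period $n$ to \emph{equal} $2p$ or $2p+1$, which your reading does not grant: it gives only $n\mid 2p$ or $n\mid 2p+1$. The repair is routine but should be written down. If $n\mid 2p+1$, then $n$ is odd and $-2p\equiv 1\pmod n$ still holds, so your $\zeta$ works verbatim. If $n\mid 2p$ with $n$ even, then $p\equiv 0$ or $n/2 \pmod n$, and in either case the two solutions $k_0$ and $k_0+n/2$ of the relevant congruence supply a point $\zeta$ with both $\zeta$ and $f^{-p}\zeta$ in the same fixed set. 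If $n\mid 2p$ with $n$ odd, then $n\mid p$, so the unique point of $\cO(z)\cap\Fix{S}$ is fixed by $f^p$ along the orbit and hence lies in $\Fix{S}\cap f^p(\Fix{S})$. With those three lines added, the proof is complete.
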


Consequently, in order to find a symmetric periodic orbit one should search among the
intersections of the images of $\Fix{S}$ and $\Fix{f \circ S}$. In the
two-dimensional case, if the intersection of these two sets is transverse then this orbit is a
saddle or an elliptic point. If this intersection point is a quadratic tangency, then the
orbit is parabolic, and generically corresponds to a saddle-center bifurcation upon varying a
parameter. The proof of this statement is in the \App{creation_elliptic}.

This indeed is confirmed by calculation.\footnote
{the authors thank A. Kazakov for his supporting numerics made upon our request.}
For example, using the reversor \Eq{Reversor}, we found a pair of
period-$12$ orbits born on $ \Fix{S} \cap f^6(\Fix{S})$ in a saddle-center bifurcation at $
\eps \approx 0.37970034$. For slightly larger $\eps$, one of these orbits is elliptic, and is
surrounded by an island, see \Fig{Period12}. As is typical, this island is also surrounded by
other elliptic points---a prominent period-$12 \times 7$ orbit is visible---and this orbit
also surrounded by more elliptic periodic orbits, for example one of period $12 \times 7
\times 6 = 504$. This island has stable orbits only for a parameter window of width $\Delta
\eps \approx 2(10)^{-7}$.

We conjecture that there are similar, higher-period, elliptic orbits for arbitrarily small,
positive $\eps$. However, even a numerical investigation of this seems very difficult.

\InsertFig{Period12}{Island around a period-$12$ elliptic point for \Eq{StdMap} with
\Eq{Force} for $\eps = 0.37970035$ and $\mu = \mu_p(\eps)$.
The bounds of this figure are $[-0.13527,-0.13505] \times [-5.5(10)^{-8}, 5.5(10)^{-8}]$.}
{Period12}{3in}


\section{Asymptotic Behavior of the Channel}\label{sec:Slope}
Let $F: \bR^2 \to \bR^2$, be a lift of $f$, chosen so that its fixed points, which we will
call $\tilde{h}$ and $\tilde{p}$, lie in the fundamental square $[-\tfrac12,\tfrac12) \times
[-\tfrac12,\tfrac12)$. The lifted unstable manifolds, $W^u(\tilde{h})$ and $W^u(\tilde{p})$,
see \Fig{Strips_eps06}, are observed to be asymptotic to a ray with fixed slope, and indeed
to be asymptotic to one another. Of course, reversibility implies that the same properties
hold for the stable manifolds (with another slope).

To explain this, we first parameterize the manifold $W^u(\tilde{h})$ in the following standard
way. Take some point $(x_h, y_h) = \zeta_h \in W^u(\tilde{h})$, and let
\[
	\ell=W^u(\zeta_h, F(\zeta_h))
\]
be the fundamental segment of unstable manifold connecting $\zeta_h$ to $F(\zeta_h)$.
By selecting $\zeta_h$ close enough to $\tilde{h}$ it is always possible to have $\ell$ belong to the unit
square. We parameterize $\ell = \{\zeta_h(s): s \in [0,1]\}$ so that $\zeta_h(0) = \zeta_h$ and
$\zeta_h(1) = F(\zeta_h)$. This parameterization can be extended to $\bR$ using iteration so
that $s+1$ corresponds to the image.
\[
	\zeta_h(s+1) = F(\zeta_h(s)),
\]
and $s-1$ to the preimage, etc. Since the iterates of $\ell$ cover $W^u(\tilde{h})$, we get a
full parameterization: \beq{parameterize}
	W^u(\tilde{h}) = \{\zeta_h(s) = (x_h(s),y_h(s)): s \in \bR, \}.
\eeq
Note that $\zeta_h(s) \to \tilde{h}$ as $s\to -\infty$.

Computing the manifolds up to $x = 1000$ and for values $0.01\le \eps \le 1.0$, we observe that
\beq{Slopes}
	\frac{y_{h}(s)}{x_h(s)} = \phi^{-1} + \cO\left(\frac{1}{x_h(s)}\right) ,
\eeq
where $\phi = \tfrac12(1+\sqrt{5})$ is the golden mean. The same considerations apply to the unstable manifold of the parabolic point
$W^u(\tilde{p})$. To verify these observations, we first recall that the map \Eq{StdMap} is
semi-conjugate to the linear, Anosov map $a$, \Eq{Anosov}, i.e., that there is a continuous,
onto map $k$ that is homotopic to the identity such that \Eq{SemiConjugacy} holds
\cite{Lerman10}. In particular, this implies that $f$ induces a map $f^* = a^*$ on the 
fundamental group $\bZ^2$ of the torus that is hyperbolic.
The Anosov map $a$ has a unique fixed point at the origin, a saddle with
eigenvalues $\phi^2$ and $\phi^{-2}$. The right-going unstable manifold of the origin, is the
projection of the right-going unstable eigenvector of the matrix $A$ onto the torus. This has
slope $\phi^{-1}$---and of course this slope is precisely the slope that we observe in
\Eq{Slopes}.

\InsertFig{Strips_eps06}{Stable (blue) and unstable (red) manifolds of the hyperbolic fixed
point, and stable (brown) and unstable (green) manifolds of the parabolic fixed point for 
$\eps =0.6$ for the lift $F$. The gray region, labeled $E^+$, is a portion of the upper half 
of the lifted channel. Also shown are the strips constructed from the lifts of manifold 
segments $W^{u,s}(h,q)$ to a homoclinic point $q$. The image $F(q+m)$ lies on the stable 
segment $W^u(h+m',q+m')$ where $m' = (3,2)$.}{Strips_eps06}{6in}

We will prove the following result.
\begin{thm}\label{thm:Slope}
Let $F$ be a lift of \Eq{StdMap} to the plane and assume that the force $g$ is of the
form \Eq{Force} with $\eps$ small enough. Then the right-going unstable
manifolds of the saddle $\tilde{h}= (x_h,y_h)$ and parabolic $\tilde{p} = (x_p,y_p)$ fixed
points of $F$ both tend to $\infty$ and have a limiting slope $\phi^{-1}$, the inverse of the
golden mean.
\end{thm}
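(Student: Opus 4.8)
The plan is to push the problem through the semi-conjugacy \Eq{SemiConjugacy} onto the linear cat map, where the unstable direction is a genuine eigenline. First I would lift to the plane. Since $k$ is homotopic to the identity it has a lift $K:\bR^2\to\bR^2$ with $u:=K-\mathrm{id}$ being $\bZ^2$-periodic, hence continuous and bounded, $|u|\le M$; in particular $|K(v)|\to\infty$ as $|v|\to\infty$, so $K$ is proper. Lifting \Eq{SemiConjugacy} yields $K\circ F=A\circ K$ with $A$ the matrix of \Eq{Anosov}. Because $h$ and $p$ are fixed by $f$, the points $k(h),k(p)$ are fixed by $a$; as $a$ has the origin as its unique fixed point, $K(\tilde h)$ and $K(\tilde p)$ are elements of $\bZ^2$ fixed by $A$, and since $A-I$ is invertible the only such lattice point is $0$. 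Thus $K(\tilde h)=K(\tilde p)=0$.

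Next I would show that $K$ crushes each right-going unstable manifold into the unstable eigenline $L^u=\mathrm{span}\,e^u$ of $A$, where $e^u=(1,\phi^{-1})$ is the eigenvector for $\phi^2$---a line of slope $\phi^{-1}$. For $\zeta\in W^u(\tilde h)$ one has $F^{-n}(\zeta)\to\tilde h$, hence $A^{-n}K(\zeta)=K(F^{-n}\zeta)\to K(\tilde h)=0$; but $A^{-n}w\to0$ exactly when $w\in L^u$, so $K(W^u(\tilde h))\subseteq L^u$, and likewise $K(W^u(\tilde p))\subseteq L^u$. Writing $K(\zeta_h(s))=t(s)\,e^u$ for the parameterization \Eq{parameterize} and applying $K\circ F=A\circ K$ to $\zeta_h(s+1)=F(\zeta_h(s))$ gives the scalar relation $t(s+1)=\phi^2\,t(s)$, with $t(s)\to0$ as $s\to-\infty$ because $K(\tilde h)=0$. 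Crucially, this step is insensitive to the parabolic degeneracy: the semiconjugate dynamics is dictated by $A$, so the parabolic manifold obeys the identical recursion.

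Granting that the lifted right-going branch escapes every bounded set---so that $t$ has constant positive sign and $t(s)\to+\infty$---the theorem follows at once. From $\zeta_h(s)=t(s)\,e^u-u(\zeta_h(s))$ and $|u|\le M$ we get $x_h(s)=t(s)+\cO(1)\to\infty$ and $y_h(s)=\phi^{-1}t(s)+\cO(1)$ (the folds seen numerically are absorbed into the bounded $u$-term), whence
\[
\frac{y_h(s)}{x_h(s)}=\frac{\phi^{-1}t(s)+\cO(1)}{t(s)+\cO(1)}=\phi^{-1}+\cO\!\left(\tfrac{1}{x_h(s)}\right),
\]
which is precisely \Eq{Slopes}. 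The argument for $W^u(\tilde p)$ is word-for-word the same.

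The hard part---and the one genuinely about $F$ rather than about $A$---is the escape statement: that the lifted right-going branch leaves every compact set as $s\to+\infty$. In the present language this is equivalent to $t$ never vanishing: a zero $t(r_*)=0$ would, through $t(r_*+n)=\phi^{2n}t(r_*)=0$, place infinitely many points $\zeta_h(r_*+n)=-u(\zeta_h(r_*+n))$ inside the bounded set $|v|\le M$, i.e.\ it would produce bounded pieces of the manifold arbitrarily far out in $s$. Ruling this out is delicate because $k$ is only continuous and can in principle fold or pinch the manifold: the fixed points $h$ and $p$ both lie in $k^{-1}([0,0])=K^{-1}(0)$ modulo the lattice, and one must show the right-going branch meets this collapsed set only at its own endpoint $\tilde h$ (resp.\ $\tilde p$). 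Once escape is known, nonvanishing of $t$ gives it a constant sign on every unit interval, and the local geometry of the branch just to the right of the fixed point (where $x>x_h$, resp.\ the cusp $x>x_p$) fixes that sign as positive; the quantitative $\phi^{2\floor{s}}$ growth from the recursion then forces $t(s)\to+\infty$ uniformly. I expect the escape step to be the crux, and would establish it by the direct planar construction indicated in \App{UnBounded}, or alternatively from the monotone (cell-like) nature of the Franks semi-conjugacy along the expanding direction.
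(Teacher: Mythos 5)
Your proposal is correct in substance, and it isolates the true crux exactly where the paper does, but your route to the limiting slope is genuinely different from the paper's. The paper proves escape to infinity (via a strip construction from a primary homoclinic loop, backed by the orbit bound of \App{UnBounded}), then invokes the Weil--Markley theorem (\Th{Weil}) to get bare \emph{existence} of an asymptotic direction, and finally identifies that direction by passing to the limit in the chord-slope recursion \Eq{SlopeIterate}, solving $m^*=(1+m^*)/(2+m^*)$. You instead work at the level of the lifted Franks semi-conjugacy: writing $K=\mathrm{id}+u$ with $u$ periodic and bounded, you show $K(W^u(\tilde h))\subseteq L^u$, which is a lift-level sharpening of \Lem{SemiManifolds} (the paper states that collapse only on the torus), so that once escape is known the manifold stays within bounded distance of the eigenline and the slope follows at once, \emph{with} the quantitative rate $\cO(1/x)$ of \Eq{Slopes}---a statement the paper only observes numerically, since the soft Weil argument cannot produce a rate. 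The trade-off is what each argument leans on: yours needs the structural facts about $k$ (homotopic to the identity, hence bounded displacement of a lift, plus the normalization $(A-I)c=m$, solvable over $\bZ^2$, to arrange $K\circ F=A\circ K$ exactly), whereas the paper's Weil-plus-recursion route needs only that the lifted curve is simple and escapes, and the degree-one property of $g$ to evaluate the limit. Both proofs stand or fall on the same step, which you correctly flagged as the hard part: that the lifted right-going branch has no finite accumulation points; and both discharge it identically, by the explicit Fibonacci lower bound of \App{UnBounded} applied to a fundamental segment satisfying $x+\phi^{-1}y>\tfrac{1}{2\pi}(1+2\eps)$, verified for $f_0$ and carried to small $\eps>0$ by closeness of the perturbed manifolds. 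One small simplification: your appeal to ``local geometry'' to fix the sign of $t$ is unnecessary, since the bound of \App{UnBounded} gives $x_t,y_t\to+\infty$, and $t(s)=x_h(s)+\cO(1)$ then makes positivity automatic.
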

As a start to the proof of \Th{Slope}, we first note that the semi-conjugacy collapses the
stable and unstable manifolds of both $p$ and $h$ onto the corresponding manifolds of the
fixed point of the Anosov map:

\begin{lem}\label{lem:SemiManifolds}
The semi-conjugacy $k$ transforms $W^u(h)$, the upper boundary of
the channel $E$ for $f$, onto the unstable manifold $\gamma = W^u(O)$ of $O= (0,0)$
under the Anosov map $a$. The same is true for $W^u(p)$. Similarly both $W^s(h)$ and $W^s(p)$
are mapped by $k$ onto the stable manifold of $O$ under $a$.
\end{lem}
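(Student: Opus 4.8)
The plan is to exploit the semi-conjugacy relation \Eq{SemiConjugacy}, $k\circ f = a \circ k$, which already tells us how $k$ intertwines the two dynamics. The key observation is that a semi-conjugacy by a continuous, onto map that is homotopic to the identity must carry invariant manifolds of a fixed point to invariant sets of the image fixed point. First I would verify that $k$ maps the fixed points to the fixed point: since $k$ is homotopic to the identity and $a$ has a \emph{unique} fixed point at $O=(0,0)$, any $f$-fixed point $z$ satisfies $a(k(z)) = k(f(z)) = k(z)$, so $k(z) = O$. Thus both $h$ and $p$ are collapsed to $O$ by $k$.

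The main work is showing that $k$ sends $W^u(h)$ \emph{onto} the full unstable manifold $\gamma = W^u(O)$, and not merely \emph{into} it. For the ``into'' direction I would argue as follows. If $z \in W^u(h)$, then $f^{-n}(z) \to h$ as $n \to \infty$. Applying $k$ and using the intertwining relation iterated, $a^{-n}(k(z)) = k(f^{-n}(z)) \to k(h) = O$ by continuity of $k$. A point whose backward orbit under the Anosov map $a$ converges to $O$ lies in $W^u(O)$ by definition, so $k(z) \in \gamma$. The same verbatim argument applies to $W^u(p)$, and the stable cases follow by replacing $f^{-n}$ with $f^{n}$ and $a^{-n}$ with $a^{n}$.

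For surjectivity onto all of $\gamma$, I would use the parameterization by fundamental segments together with the continuity of $k$. The unstable manifold $W^u(O)$ of the Anosov map is a single immersed line swept out by iterating a fundamental segment $\ell_a$ of $\gamma$ under $a$, and likewise $W^u(h)$ is swept out by iterating a fundamental segment $\ell$ under $f$ as in \Eq{parameterize}. Because $k$ is continuous and maps $\ell$ into $\gamma$ carrying the endpoint $\zeta_h$ near $h$ to a point near $O$, the image $k(\ell)$ is a connected arc of $\gamma$ emanating from $O$; the intertwining $k(F(\zeta)) = a(k(\zeta))$ then forces the images of successive fundamental segments to be consecutive fundamental segments of $\gamma$, so the union covers $\gamma$ entirely. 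The one point requiring care is that $k(\ell)$ is not a single point (i.e.\ that $k$ does not collapse the whole manifold to $O$): this follows because the backward orbit of a point $z\in W^u(h)\setminus\{h\}$ does not converge to $h$ (it moves away along the unstable direction), so $k(z)\neq O$ for such $z$, as otherwise $a^{-n}(k(z))\equiv O$ would force $k(z)$ onto a bounded piece yet $k(f^n(z))$ tends along $\gamma$ to infinity.

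\textbf{The main obstacle} I anticipate is precisely this surjectivity-and-nondegeneracy step: ruling out that the semi-conjugacy collapses a positive-length piece of $W^u(h)$, or worse the entire manifold, onto the fixed point. The ``into'' inclusion is essentially automatic from the definition of stable/unstable sets and the intertwining relation, but establishing that the image is \emph{all} of $\gamma$ relies on the global structure of $k$ (its homotopy type and the fact that it does not kill the homotopy class of the unstable direction). Here I would lean on the fact, recalled in the excerpt, that $f$ induces the hyperbolic map $a^*$ on the fundamental group $\bZ^2$, so the image of the unstable manifold must realize the expanding eigendirection of $A$ and hence cannot be contained in any compact set; combined with connectedness this yields the full manifold.
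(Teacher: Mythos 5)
Your first two steps coincide with the paper's proof: fixed points map to fixed points, and since $a$ has the unique fixed point $O$ you get $k(h)=k(p)=O$; then for $\zeta\in W^u(h)$ the backward orbit tends to $h$, so by \Eq{SemiConjugacy} and continuity of $k$ the backward $a$-orbit of $k(\zeta)$ tends to $O$, giving $k(W^u(h))\subset\gamma$. Up to there the proposal is fine and essentially identical to the paper.

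The gap is exactly at the step you yourself flag as the main obstacle: ruling out that $k$ collapses $W^u(h)$ to the single point $O$. Your justification fails in two ways. First, it is factually garbled: the backward orbit of $z\in W^u(h)\setminus\{h\}$ \emph{does} converge to $h$ --- that is the definition of the unstable manifold, and it is what your own ``into'' argument uses; only the forward orbit escapes. Second, the inference ``so $k(z)\neq O$'' is circular: $k$ is badly non-injective (already $k(p)=k(h)=O$, and $k^{-1}(O)$ is a compact, fully invariant set), so nothing a priori prevents $k$ from sending an arc of $W^u(h)$ through $h$ --- and hence, by invariance and expansion along the manifold, all of $W^u(h)$ --- to $O$; your contradiction invokes ``$k(f^n(z))$ tends along $\gamma$ to infinity,'' but under the collapse hypothesis $k(f^n(z))\equiv O$, so you are assuming precisely what must be proved. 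The fallback appeal to the induced hyperbolic map on $\pi_1(\bT^2)$ is too weak as stated: the action on the fundamental group constrains images of \emph{loops}, not of the simply connected curve $W^u(h)$. One can make an argument of that flavor work, but only with substantial extra input (a transverse homoclinic point $q$ whose loop $W^u(h,q)\cup W^s(h,q)$ is non-contractible, contraction of stable segments under iteration to show the lifted manifold $W^u(\tilde h)$ is unbounded, and then the fact that a lift $K$ of $k$ is the identity plus a bounded periodic term, so total collapse would force $W^u(\tilde h)$ to be bounded); that input rests on Lewowicz's transversality and $C^1$-closeness for small $\eps$, not on the $\pi_1$ data alone. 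The paper closes the gap by a different and much cheaper route, which uses the parameter family in an essential way: $k_\eps$ depends continuously on $\eps$ and $k_0$ is a \emph{homeomorphism} at $\eps=0$, so for $\eps$ small the images $k_\eps(\zeta_h)$ and $k_\eps(f_\eps(\zeta_h))$ of the endpoints of a fundamental segment remain distinct; the image of that segment is then a nondegenerate arc of $\gamma$ from some $w\neq O$ to $a(w)$, and its $a$-iterates cover $\gamma$. Note that this is also why the lemma is really a statement for ``$\eps$ small enough,'' a restriction your fixed-map argument cannot exploit.
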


\begin{proof}
The map $f$ has exactly two fixed points: $h$, and $p$, and the map $a$ has the unique fixed 
point, $O$. Using \Eq{SemiConjugacy}, the $k$-image of a fixed point for $f$ must be a fixed 
point for $a$, and hence $k(p) = k(h) = O$. By definition, the backward orbit of any point $
\zeta \in W^u(h)$ tends to $h$: $ \zeta_t = f^{t}(\zeta)\to h$ as $t \to -\infty$. Under the 
semi-conjugacy, one has $k(\zeta_t) = a^t\circ k(\zeta)$, thus the $k$-image of the backward 
orbit of $\zeta$ is the backward orbit of the point $k(\zeta)$. Since $k$ is continuous, 
$k(\zeta_t)\to k(h) = O$ as $t\to - \infty$, thus $k(\zeta) \in \gamma$. Thus $k(W^u(h))$ is a 
subset of $\gamma$. To prove that this image is onto $\gamma$ we need to find a fundamental 
segment in $W^u(h)$ whose image is a fundamental segment in $\gamma$. One way to do this is to 
recall that for $\eps = 0$ the map $k_0$ is a homeomorphism. Therefore since $k_\eps$ depends 
continuously on $\eps$, when $\eps$ is small enough, the $k$-images of the distinct points $
\zeta_h$, $f_\eps(\zeta_h)$ are distinct. Thus the image of the fundamental segment in $W^u(h)
$ covers a fundamental segment of $\gamma$.

By a similar argument, the same results hold for the stable manifold.
\end{proof}

The main tool we will use in the proof of \Th{Slope} is a theorem stated by Weil in 1935 at the
Moscow Topological Conference \cite{Weil36}, but proved much later by Markley
\cite{Markley69}. Let $L$ be a continuous, semi-infinite, simple
(without self-intersections) curve on the torus $\bT^2 = \bR^2/\bZ^2$ and
$\tilde{L} =\{(x(t),y(t)): t\in \bR^+\}$ be its parameterized lift to the plane.
Let $\rho(x,y)$ denote the standard Euclidean distance from $(x,y)$ to the origin. Then the following theorem holds.

\begin{thm}[Weil (1935)]\label{thm:Weil}
If $\rho(x(t),y(t))$ tends to infinity as $t\to \infty$, then $\tilde{L}$ has an
asymptotic direction; that is, either there exists a slope $m^* \in \bR$ such that $\lim_{t\to \infty} y(t)/x(t) = m^*$, or, if this ratio is unbounded, then $\lim_{t\to \infty}x(t)/y(t)= 0$.
\end{thm}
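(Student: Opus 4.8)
The plan is to reduce \Th{Weil} to a single statement about a limit set of directions, and to isolate the one place where the torus hypothesis (simplicity of $L$) is essential. Write $\tilde L = \{(x(t),y(t)): t \ge 0\}$ and, for $t$ large enough that $\rho(x(t),y(t))>0$, let $\theta(t) = \arg(x(t)+iy(t))$ be the direction of the position vector, viewed in $\bS^1$. Let $\cD \subset \bS^1$ be the set of subsequential limits of $\theta(t)$ as $t\to\infty$. The theorem is exactly the assertion that $\cD$ is a single point: if $\cD = \{\alpha^*\}$ then $\theta(t)\to\alpha^*$ (a map into a compact space whose only subsequential limit is $\alpha^*$ converges), and since $\rho\to\infty$ one gets $y(t)/x(t) = \tan\theta(t) \to \tan\alpha^* =: m^*$ when $\cos\alpha^*\neq 0$, while $x(t)/y(t) = \cot\theta(t)\to 0$ when $\cos\alpha^* = 0$; these are precisely the two alternatives in the statement. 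Two elementary facts set the stage. First, because $L$ is simple on $\bT^2$, its lift $\tilde L$ is a simple plane curve and, crucially, the integer translates $\tilde L + v$, $v \in \bZ^2\setminus\{0\}$, are disjoint from $\tilde L$: an equality $(x(t_1),y(t_1)) = (x(t_2),y(t_2)) + v$ would project to a self-intersection $L(t_1)=L(t_2)$ on the torus, forcing $t_1 = t_2$ and $v=0$. Second, translation by the bounded vector $v$ does not change asymptotic directions, so every translate $\tilde L + v$ has the same direction limit set $\cD$.

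Next I would show that $\cD$ is closed and connected, hence a point, a closed arc, or all of $\bS^1$. Closedness is immediate. For connectedness I would use the standard fact that the set of subsequential limits of a continuous map of the connected domain $[T,\infty)$ into the compact space $\bS^1$ is connected: were $\cD$ separated by two disjoint open neighborhoods, the continuous function $\theta$ would have to pass through the compact complementary gap for arbitrarily large $t$, producing a limit point of $\theta$ outside $\cD$. This step is routine.

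The crux is to rule out a nondegenerate $\cD$, and this is exactly where disjointness of the translates must enter. Suppose $\cD$ contains an arc. Then I can choose a primitive integer vector $v$ whose direction $\arg v$ lies in the interior of $\cD$. Because $\theta(t)$ accumulates on $\arg v$ from both sides with $\rho\to\infty$, the curve $\tilde L$ crosses the ray $\bR^+ v$, passing from one side to the other, at a sequence of points $q_n$ with $\rho(q_n)\to\infty$; the sub-arc of $\tilde L$ between two consecutive crossings $q_n,q_{n+1}$, together with the ray segment between them, bounds a Jordan-type domain $B_n$. The tension I would exploit is that the translate $\tilde L + v$ has the same direction set $\cD$, so it too accumulates on the whole arc and must repeatedly cross the ray near the $q_n$, entering $B_n$ through its ray side; since $B_n$ is bounded while $\tilde L + v$ escapes to infinity, a Jordan-curve/parity argument forces $\tilde L + v$ to exit through the arc side of $\partial B_n$, i.e. to meet $\tilde L$---contradicting disjointness, and hence the simplicity of $L$ on the torus. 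I expect this last step to be the genuine obstacle: making the crossing bookkeeping uniform, so that the required intersection is actually produced for some $n$ rather than the translate merely poking in and out through the ray, is the delicate part of Markley's argument and will likely require tracking an orientation or intersection number along the ray, and possibly passing to a bi-infinite or limiting version of the curve. Once $\cD$ is shown to be a single point, the reduction in the first paragraph completes the proof.
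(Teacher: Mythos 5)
There is nothing in the paper to compare your argument against: the paper does not prove \Th{Weil}. It imports the result as a black box---the statement is attributed to Weil \cite{Weil36} and the proof to Markley \cite{Markley69}---and then uses it in the proof of \Th{Slope}. So your proposal has to stand on its own. Its soft parts do stand: the reduction to showing that the set $\cD \subset \bS^1$ of limit directions is a single point is correct (only the implication ``$\cD$ a singleton implies the theorem'' is needed, and connectedness of $\cD$ disposes of the antipodal ambiguity between directions and slopes); closedness and connectedness of $\cD$, and its invariance under integer translation of $\tilde L$, are routine; and you correctly isolate the one place where simplicity of $L$ on the torus must enter, namely the disjointness of $\tilde L$ from its translates $\tilde L + v$, $v \in \bZ^2\setminus\{0\}$, without which the theorem is false (plane spirals satisfy $\rho\to\infty$ with no asymptotic direction).

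The gap is the step you yourself flag, and it is not a technicality---it is the entire content of the theorem. Two things are missing in the argument that an arc in $\cD$ forces $\tilde L \cap (\tilde L + v) \neq \emptyset$. First, the entry of $\tilde L + v$ into the domain $B_n$ is never established: the intersections of $\tilde L + v$ with the ray $\bR^+ v$ are (up to a compact initial piece) exactly the points $q_m + v$, i.e.\ the original crossings shifted outward along the ray by $|v|$, and nothing guarantees that any of them lies in the ray segment bounding a given $B_n$; since the spacing of the crossings $q_n$ along the ray is completely uncontrolled, the two crossing sets may fail to interleave for every $n$. Second, even granted an entry through the ray side of $\partial B_n$, the translate can leave again through the ray side; excluding this ``poking in and out'' requires an intersection-number or parity bookkeeping along the ray (or a different device, such as passing to the cylinder $\bR^2/\bZ v$, where the projected curve is still injective but no longer proper, so the naive properness argument fails too). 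Everything else in your proposal is elementary point-set topology; this one step is precisely what Weil could state but not prove and what Markley's paper supplies. Deferring it to ``the delicate part of Markley's argument'' leaves an outline of a proof, not a proof.
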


\noindent
In other words, if the lifted curve $(x(t),y(t))$, $t \in [0,\infty)$, has no finite
accumulation points (i.e., there is no sequence $t_n \to \infty$ such that 
$\lim_{n\to \infty}\rho(x(t_n),y(t_n)) = (x_*,y_*)$), then it has an asymptotic direction.

Given these results, we now proceed to prove our theorem:

\begin{proof}[Proof of \Th{Slope}]
Choose lifts $F$ and $K$ of the maps $f$ and $k$ to the plane such that the fixed points $\tilde{h}$
and $\tilde{p}$ of $F$ lie in the unit square $[- \tfrac12,\tfrac12)\times [-
\tfrac12,\tfrac12)$ and that $K(\tilde{h}) = K(\tilde{p}) = O$. We are to prove that the
unstable manifolds $W^u(\tilde{h})$ and $W^u(\tilde{p})$ of $F$ go to infinity and have an
asymptotic direction equal to $\phi^{-1}$---the same as that of the unstable direction of the
linear map $A$. For the first part, we will apply \Th{Weil},
and thus we only need to verify that both of the lifted curves have no finite accumulation points.
We shall prove this for $W^u(\tilde{h})$, since the proof for $W^u(\tilde{p})$ is the same.

Begin by choosing a primary, transverse homoclinic point $q \in W^u(h) \cap W^s(h)$,\footnote
{Recall that a homoclinic point $q$ of a plane diffeomorphism with a saddle fixed point $h$ is ``primary," if the closed loop $\overline{W}^u(h,q) \cup \overline{W}^s(h,q)$ has no self-intersection points.}
and consider the segments $W^u(h,q)$ and $W^s(h,q)$, with orientations from $h$ to $q$. We claim that it is possible
to choose $q$ so that $W^s(h,q)$ lies in the interior of the unit square.
Consequently, the forward images, $f^t(q)$, remain in a neighborhood of $h$, moving monotonically from $q$ to $h$ along
the local segment $W^s(h,q)$. Note that the tangent vectors at $q$ to $W^u(h)$ and $W^s(h)$ form a frame. Since $f$ is
symplectic and thus orientation-preserving,
the orientation of this frame is preserved under $Df$.

The existence of such a $q = q_\eps$ follows from the $C^1$
closeness of the manifolds of $f_\eps$ to those of $f_0$, apart from an $\cO(\sqrt{\eps})$-
neighborhood of $O$, and the existence of a transverse homoclinic point on the right-going 
unstable manifold of $W^u(O)$ for $f_0$. Indeed \cite{Lewowicz80} showed that the 
intersections of the stable and unstable manifolds of $f_0$ are transverse everywhere except 
at $O$. Choose one such primary intersection, $r \in W^u(O) \cap W^s(O)$. Since $f_0^{t}(r) 
\to O$ as $t\to \infty$, there is an image $q_0 = f^k(r)$, such its forward images
lie in a ball of radius, say, $\tfrac14$ of $O$.
This homoclinic point $q_0$ is of course still primary and transverse.
Now we take $\eps$ small enough in order that:
(1) $q_0$ is not in an $\cO(\sqrt{\eps})$-neighborhood of $O$;
(2) the intersection point $q_0$ continues to a point $q_\eps$ still in the unit square; and
(3) the intersection at $q_\eps$ remains transverse.
This can be done since since the manifolds of $f_\eps$ are $C^1$ close to those of $f_0$.

The loop $\cC = \overline{W}^u(h,q) \cup \overline{W}^s(h,q)$ is a simple (since $q$ is 
primary) closed
curve on the torus. This curve is not homotopic to zero and has some
nontrivial representation $(m_1,m_2) \in \bZ^2$, in the fundamental group of the
torus. This follows from that fact that $\cC$ is homotopic to the
loop made up from pieces of $W^u(O)$ and $W^s(O)$ of $f_0$, and hence it is
homeomorphic to the related loop of the Anosov map $a$, which is not homotopic
to zero.

A lift of $\cC$ to the covering plane unwinds to an infinite
curve that tends to infinity with rational slope $m_2/m_1$.
The collection of all lifts of $\cC$ cut the plane into infinite number of disjoint strips, 
recall \Fig{Strips_eps06}. Consider the segment $\cU = W^u(\tilde h,\tilde q+m)$ that belongs 
to the upper boundary of one of these strips, say $\Pi_1$. The image $F(\cU) = W^u(\tilde h, 
F(\tilde q+m))$ expands and, by orientation preservation, enters the interior of $\Pi_1$. The 
second endpoint, $F(\tilde q+m) = F(\tilde q) + m'$, where $m' = Am$, lies in the interior of 
the segment $W^s(\tilde h+m',\tilde q+m')$, and is not on the upper boundary of $\Pi_1$, since 
$m'$ is not parallel to $m$.

Two cases are possible. The first occurs when $F(\cU)-\cU$ intersects the boundary of $\Pi_1$ 
only at $F(\tilde q+m)$, i.e., it intersects no other lift of $W^s(h,q)$. The implication is 
that at the next iteration, $F^2(\cU)$ will cross the strip below $\Pi_1$,  due to 
preservation of orientation, etc. In this case, the right-going manifold $W^u(\tilde{h})$ goes to infinity and cannot have accumulation points in finite part of the plane. Similar 
considerations were used in \cite{Grines77}. However, it may be the case that $F(\cU) -\cU$ 
intersects some additional lifts of the segment $W^s(h,q)$ that belong to the boundary of $
\Pi_1$. In this second case, $F(\cU)$ has to leave and then return to $\Pi_1$ since its 
extreme point $F(\tilde q + m)$ still exits $\Pi_1$ through $W^s(\tilde h +m',\tilde q+m')$. 
This leads to a potential problem exemplified by the folds shown in \Fig{ChannelFolds}: there 
could be loops homotopic to zero made up from pieces of $W^u(h)$ and $W^s(h)$.

Nevertheless, as we show in \App{UnBounded}, using specific properties of the map \Eq{StdMap}, 
all points of $W^u(\tilde h)$ tend to infinity. To apply the considerations of 
\App{UnBounded}, we need to choose a fundamental segment $\ell \in W^u(\tilde h)$ in the first 
quadrant, such that for all $(x,y) \in \ell$,
\[
	x + \phi^{-1} y > \tfrac{1}{2\pi}(1+2\eps).
\]
To this end, it is enough to verify that a segment $\ell_0$ exists for $f_0$ (for $\eps=\mu 
=0$), a
fact that is easily numerically verified. Then, $\ell_\eps$ satisfies the requirement for 
$f_\eps$ for small enough $\eps$, since the unstable manifolds of both $\tilde h$ and 
$\tilde p$ are $C^1$-close to those of $f_0$ on compact sets away from an 
$\cO(\sqrt{\eps})$-neighborhood of $O$ (in fact, $C^0$-closeness of manifolds is sufficient).

Therefore we have shown that there are no accumulation points, and \Th{Weil} applies, and thus $W^u(\tilde{h})$ has a limiting slope. The same argument applies to $W^u(\tilde{p})$, since it too is mapped onto $\Gamma$ by $K$.

Finally we need to prove that the limiting slope is indeed equal to $\phi^{-1}$. To that end
we use \Eq{StdMap} with the assumption that $g$ has degree one:
\beq{DegreeOne}
	g(x) = x + \hat{g}(x),
\eeq
where $\hat{g}(x)$ is a continuous, periodic function. If $(x_h(s),y_h(s)) = \zeta_h(s)$, then
under the map $F(x,y) = \zeta_h(s+1)$. Define the slope of the chord from the fixed point to $
\zeta_h(s)$ by
\[
	m(s) = \frac{y_h(s)-y_h}{x_h(s)-x_h}.
\]
Subtracting the fixed point from both sides of \Eq{StdMap}, and computing the slope gives,
after some algebra,
\beq{SlopeIterate}
	m(s+1) = \frac{1+ m(s)+\frac{\hat{g}(x_h(s)) -\hat{g}(x_h)}
			{x_h(s)-x_h}}
			{2+m(s)+\frac{\hat{g}(x_h(s))-\hat{g}(x_h)}{x_h(s)-x_h}}.
\eeq
Now, according to \Th{Weil}, the slope $m(s)$ has a limit,
$m^*$. Moreover, since $x_h(s)$ is unbounded and $\hat{g}(x)$ is periodic,
\[
	\lim_{s\to\infty}\frac{\hat{g}(x_h(s))-\hat{g}(x_h)}{x_h(s)-x_h} = 0 .
\]
Thus after taking the limit on both sides of \Eq{SlopeIterate} we come to
\[
	m^*=\frac{1+m^*}{2+m^*},
\]
which implies, since $m(s) > 0$, that $m^*= \phi^{-1}$.
\end{proof}

Not only do $W^u(\tilde{h})$ and $W^u(\tilde{p})$ for the lift $F$ have the same limiting
slope, as implied by \Th{Slope}, but they converge to each other. This can be seen numerically
by choosing the first parameter value for which each curve crosses a particular abscissa value
$\xi = x_p(s_p) = x_h(s_h)$. Let $\eta_h(\xi) = y_h(s_h)$, and $\eta_p(\xi) = y_p(s_p)$ denote
the corresponding ordinates. We observe (again computing up to $\xi = 1000$) that the vertical
distance between these curves decreases algebraically as
\beq{VerticalDistance}
	\eta_h(\xi) - \eta_p(\xi) = \cO\left(\xi^{-1}\right),
\eeq
see \Fig{DeltaYvsEps}. A proof of this result appears to be nontrivial.
Indeed, this decay is not uniform---since we define $\xi$ to be the
first horizontal crossing, the formation of folds causes the vertical distance to exhibit jumps.
More generally, close approaches to the stable channel cause oscillations;
the first place this occurs is near $\xi = 2.5$. In this
case the vertical distance decreases monotonically up to $\xi = 2$, it subsequently increases
as the unstable channel crosses the initial segment of the stable channel, reaching a local
maximum near $(x_h,y_h) \approx (2.59,1.64)$. The next two local maxima occur at $(5.39,3.38)$
and $(7.23,4.57)$, again correlated with crossing the stable channel. As can be seen in
\Fig{DeltaYvsEps}, these local maxima occur at approximately the same values of $\xi$ for any
value of $\eps$. Indeed, the function $[\eta_h(\xi)-\eta_p(\xi)]\xi$ appears to be
quasiperiodic, with two dominant periods $\Delta \xi = 2.61$ and $6.90$, again independent of 
$\eps$.

\InsertFig{DeltaYvsEps}{Vertical distance between the unstable manifolds of the saddle and parabolic
fixed points for the lift of \Eq{StdMap} up to $\xi = 200$ for three values of $\eps$.
Also shown (dashed curves) are the graphs $a/\xi$, with the values of $a$ as shown.}
{DeltaYvsEps}{4in}

\section{Channel Area}\label{sec:ChannelArea}
Our goal in this section is to compute the area contained in the channel $E$ ``between" the
invariant manifolds of the parabolic and hyperbolic points, and to show that, when $\eps$ is
small, the total area of the channel is less than one, implying that the dynamics of $f$ is
partitioned into two invariant regions of nonzero measure. As noted above, and in particular
in \Fig{TwoManifoldsE0379}, it is numerically infeasible to simply iterate chosen initial
conditions in the channel, since these numerical trajectories rapidly fill every pixel in the
image. Thus, instead, we will use numerical computations of the stable and unstable manifolds 
that form the boundaries of $E$.

For the lift $F$, the region ``between" the curves $W^u(\tilde h)$ and $W^u(\tilde p)$
corresponds to the upper, unstable half, $\tilde E^+$, of the lifted channel, e.g., the gray
region in \Fig{Strips_eps06}. Its area
can be easily computed up to some finite extension $\xi$ on the plane. Let $\tilde E^+(\xi)
\subset \bR^2$ denote the region with boundary
\beq{ChannelBoundary}
	\partial \tilde E^+(\xi) = \{(x,0): x_h\le x \le x_p\}
	  + W^u(\tilde p,\zeta_p) + V_\xi - W^u(\tilde h,\zeta_h)
\eeq
where $\zeta_{h,p} = (\xi,\eta_{h,p}(\xi))$ are points on the unstable manifolds, and
\beq{VerticalSegment}
	V_\xi = \{(\xi,y): \eta_p(\xi) \le y \le \eta_h(\xi)\}
\eeq
is the connecting vertical segment. Let
\[
	\tilde A^+(\xi) = \mbox{Area}(\tilde E^+(\xi))
\]
denote the area of the unstable channel up to $\xi$. This can be most easily computed using
the relation between action and area, see \App{Action}. The results are shown in
\Fig{LiftedChannelArea} for several values of the cut-off $\xi$, as a function of $\eps$.

Since, by \Eq{ParabolicFixedPt} and \Eq{SaddleFixedPt}, $x_p-x_h = \cO(\eps^{1/2})$, and the
slope of the unstable eigenvector of \Eq{Jacobian} at $x_h$ is $\cO(\eps^{1/2})$, it can be
seen that $\tilde A^+(\xi) = \cO(\eps^{3/2})$. More precisely, this follows from the 
Hamiltonian normal form valid near $\eps=\mu = 0$. This asymptotics is supported by the
calculations shown in \Fig{LiftedChannelArea}.


\InsertFig{LiftedChannelArea}{Area of the lift of the upper, unstable half of the channel for
the lift $F$ of \Eq{StdMap} as a function of $\eps$ for six values of the cut-off $\xi$.}
{LiftedChannelArea}
{4in}

Note that $\tilde A^+_\eps(\xi)$ must be unbounded as $\xi \to \infty$: this is a consequence
of area preservation. Indeed, the unstable channel can be generated by iteration of a
``fundamental domain." For each point $\zeta = (\xi,\eta)$, on a branch of an unstable
manifold, the segment $\ell = W^u(\zeta,F(\zeta))$ generates the entire manifold under
iteration. A fundamental domain, $D$, for the unstable half of the channel corresponds to
the region with boundary
\[
	\partial D = W^u(\zeta_p,F(\zeta_p)) + F(V_\xi) - W^u(\zeta_h,F(\zeta_h)) - V_\xi ,
\]
where $\zeta_{h,p} = (\xi,\eta_{h,p})$ are points on the respective manifolds, recall
\Fig{ChannelFolds}. Note that the image $F(V_\xi)$ of the vertical segment
\Eq{VerticalSegment} is a line segment with unit slope since \Eq{StdMap} has constant twist.
The channel $\tilde E^+$ is generated by the images of $D$, and thus,
with each iteration of the map, its area grows by the area of the fundamental domain.

Since $\tilde A^+(\xi) \to \infty$ as $\xi \to \infty$, the vertical distance between the
manifolds, \Eq{VerticalDistance}, cannot decrease more rapidly than $\xi^{-1}$, confirming the
decay observed in \Fig{DeltaYvsEps}. Given this rate of convergence, it is clear that $A^+$
must increase logarithmically with the intercept $\xi$, and this is confirmed by the
computations in \Fig{AreavsXE01-05}. The oscillations seen in
this figure correspond to those seen in the vertical distance in \Fig{DeltaYvsEps}.
\InsertFig{AreavsXE01-05}{Area $A^+$ of the upper channel on the plane as a function of
the horizontal extent, $\xi$ of the channel for $\eps=0.5$ (blue, left axis) and $\eps = 0.1$ (green, right axis).
Fits (black) are to the log functions shown
using data up to $\xi = 1000$.}{AreavsXE01-05}{4in}

So, how can we conjecture that the projection $E$ of the channel $\tilde E$ onto the torus has
finite area? This must happen by the creation of heteroclinic orbits that lead to re-injection
of the channel, as we noted in \Sec{InvariantManifolds}, and showed in
\Fig{ChannelFolds}. The region $R$ in this figure, and all of its forward images, are inside
the channel, and their areas are be deleted from the lifted area of the channel upon
projection.

To compute the area of $E = E^+ \cup E^-$ on $\bT^2$ accounting for the overlap of the channel
with itself, we resort to an image-based calculation. To start, the region $\tilde E^+$ is
computed up to an extension $\xi$ as before. This region is projected into $\bT^2$, and
discretized into an $N\times N$ pixel image---a pixel is deemed to be occupied if there is a
point on the manifolds $W^u(h)$ or $W^u(p)$ that lands in the pixel. The region is computed by
filling the pixels vertically from $\eta_p$ to $\eta_h$. We fill in the channel sequentially,
increasing the cut-off $\xi$; in this case the folds, which are in the interior of the
channel, do not cause a problem with the filling algorithm. Finally the full channel is
computed by applying the reflection $S$ to the pixels in the unstable channel, giving the
image
$E_{N \times N}$. An example, for $\eps = 0.5$, is shown in \Fig{PixelChannel}.

\InsertFig{Cmatrix_E50_Cm100_nP4000}{Discretized channel $E_{N\times N}$ for $\eps = 0.5$ with
$
\xi = 100$ and $N = 4000$. The channel (black region) intersects $7,040,240$ pixels, or
$44.00\%$ of the area. For $ \xi=100$, $\eta_h-\eta_p = 1.47(10)^{-4}$, so that the channel
height is less then one pixel. Note that the folds in the manifolds are hidden
since they occur in the interior of the channel, recall
\Fig{FourManifoldsE05}.
}{PixelChannel}{6in}

We observe that as the number of pixels, $N$, grows, the computed channel area monotonically
decreases, and that the error is proportional to $N^{-1}$, see \Tbl{ChannelArea}. We can use
this to extrapolate to get an estimate of the area to an absolute error less than $10^{-4}$,
the column labeled $b_N$ in the table. A final extrapolation to remove errors proportional to
$N^{-2}$, the column $c_N$, reduces the error estimate slightly.
Thus we estimate that
\[
	A_{0.1}(100) = 0.03679 \pm 2(10)^{-5}.
\]
Note that the area of the upper lifted channel (computed using the action) is $\tilde A^+_{0.1}(100) = 0.018563$, which when doubled gives a total channel area of $0.037126$. Thus the fraction of area excluded due to overlap is about $0.9\%$.

\begin{table}[ht]
  \centering
  \begin{tabular}{@{} r|lll @{}} 
	   \thdBar{$N$}  &\thd{$a_N$} & \thd{$b_N$} & \thd{$c_N$}\\
  \hline
	500		&0.486252		\\
	1000	&0.289216	&0.092180	\\
	2000	&0.167454	&0.045692	&0.030196\\
	4000	&0.103127	&0.038800	&0.036503\\
	8000	&0.070165	&0.037203	&0.036671\\
	16000	&0.053519	&0.036873	&0.036763\\
	32000	&0.045162	&0.036805	&0.036782\\
	64000	&0.040973	&0.036784	&0.036777\\
	128000	&0.038890	&0.036807	&0.036815\\
	\hline
  \end{tabular}
  \caption{Area of the discretized channel, $a_N$, for $\eps = 0.1$ and $\xi = 100$
  as a function of the number of pixels, $N\times N$,
  in the image. The extrapolation, $b_N = 2a_N-a_{N/2}$, removes errors $\cO(N^{-1})$
  and the second, removing errors $\cO(N^{-2})$, is $c_N = \tfrac13(4b_N-b_{N/2})$.
  }
  \label{tbl:ChannelArea}
\end{table}

After this extrapolation, we vary the cut-off, $\xi$, to attempt to estimate
$A_\eps = \lim_{\xi \to \infty} \mbox{Area}(E_\eps(\xi))$. The results, again for $\eps = 0.1$ are shown in \Tbl{AreaVsXi}. After the second extrapolation, we estimate that the true area of the channel is
\[
	A_{0.1} = 0.03990 \pm 5(10)^{-5} .
\]
Using these ideas, we compute the area as a function of $\eps$ for three values of the channel
cut-off, $\xi$, see \Fig{AreaVsEps}. For $\eps < 0.01$, the results have not converged: they
depend on $\xi$ significantly. It is strange that for these values, $A(200) > A(300)$;
this is due to error in the extrapolations for $b_N$---none of the computed values $a_N(\xi)$
have this contradictory property. The error bars in the figure are estimated by
$|b_{32000}-b_{16000}|$. When $\eps >0.01$, the area seems to have converged
with $\xi = 300$. As in \Fig{LiftedChannelArea} the area again grows as $
\eps^{3/2}$.
A fit over the interval $0.01<\eps<1.0$ gives
\beq{AreaOfEpsilon}
	A_\eps = (1.02\pm0.09) \eps^{1.42 \pm 0.07},
\eeq
while a fit over the narrower interval $0.04<\eps<0.6$ gives an exponent of $1.49 \pm 0.04$.
When $\eps$ approaches $1$, the power law predicts that
$A_\eps \to 1$, and, as can be seen in the figure, the area saturates at one.

\begin{table}[htdp]
\centering
\begin{tabular}{@{} r|lll|ll|l @{}} 
\thdBar{$\xi$} &\thd{$a_{8000}$} & \thd{$a_{16000}$} & \thdBar{$a_{32000}$} &
\thd{$b_{16000}$} & \thdBar{$b_{32000}$} & \thd{$c_{32000}$}\\
\hline
30	& 0.043929	& 0.039000	& 0.036549	& 0.034071	& 0.034098	& 0.034107\\
70	& 0.059595	& 0.047852	& 0.041964	& 0.036109	& 0.036076	& 0.036065\\
110	& 0.073580	& 0.055382	& 0.046242	& 0.037184	& 0.037102	& 0.037075\\
150	& 0.087464	& 0.062759	& 0.050329	& 0.038054	& 0.037899	& 0.037847\\
190	& 0.102100	& 0.070515	& 0.054570	& 0.038930	& 0.038625	& 0.038523\\
230	& 0.116212	& 0.077799	& 0.058356	& 0.039386	& 0.038913	& 0.038755\\
270	& 0.128380	& 0.084259	& 0.061867	& 0.040138	& 0.039475	& 0.039254\\
310	& 0.141428	& 0.091035	& 0.065402	& 0.040642	& 0.039769	& 0.039478\\
350	& 0.154256	& 0.097727	& 0.068905	& 0.041198	& 0.040083	& 0.039711\\
390	& 0.167162	& 0.104657	& 0.072596	& 0.042152	& 0.040535	& 0.039996\\
430	& 0.180157	& 0.112024	& 0.076516	& 0.043891	& 0.041008	& 0.040047\\
470	& 0.192669	& 0.119685	& 0.080630	& 0.046701	& 0.041575	& 0.039866\\
510	& 0.205416	& 0.126197	& 0.083939	& 0.046978	& 0.041681	& 0.039915\\
\end{tabular}
\caption{Channel area for $\eps = 0.1$ as a function of the cut-off $\xi$.
Columns labeled $a_N$ are the computed areas for $N\times N$ pixels. The final three
columns show extrapolations $b_{N} = 2a_{N}-a_{N/2}$,
and $c_{N} = \tfrac13(4b_{N}-b_{N/2})$.}
\label{tbl:AreaVsXi}
\end{table}%

\InsertFig{AreaVsEps}{Area of the channel $E$ using the second order extrapolation as in
\Tbl{ChannelArea}, for $N=32,000$, as a function of $\eps$ for three values of the channel
cut-off, $\xi$ as shown. Error, estimated as the difference $|b_{32000}-b_{16000}|$, is only
visible when $\eps \le 0.01$}{AreaVsEps}{4in}
\section{Lyapunov Exponents}\label{sec:Lyapunov}

The Lyapunov exponents of the family $f_\eps$ appear, by the standard computation, to be
positive. We compute the finite-time exponent
\beq{FTLE}
	\lambda_\eps(x,y,T) = \frac{1}{T} \ln \|Df_\eps^T(x,y) v_0\|, \quad
	v_0 = \begin{pmatrix} 0 \\ 1 \end{pmatrix}
\eeq
for an initial condition $(x,y)$ with the vertical initial deviation vector $v_0$ over a time
$T$. The results shown in \Fig{LyapunovVsEps} give the mean exponent for $400$ initial
conditions with $T=10^4$ (for these parameters standard deviation of the distribution of
exponents is smaller than $0.005$). Note that
$\langle \lambda_0\rangle \approx 0.902177 < \ln \phi^2 \approx 0.9642$,
the exponent of the Anosov map \Eq{Anosov}
(i.e., $\eps = -1$ and $\mu=0$). The exponent decreases
monotonically from its value at $\eps = 0$ until $\eps = 1.55$, when it begins to increase
(though not monotonically), reaching $\langle \lambda \rangle \approx 1.1$ at $ \eps = 5$.

\InsertFig{LyapunovVsEps}{Lyapunov exponents for the map \Eq{StdMap} with \Eq{Force} as a
function of $\eps$. The dashed curve (blue) is the average exponent for a grid of $20\times
20$ initial conditions, each iterated $10^4$ steps. The dotted (red) and solid (black)
curves show separately the mean exponents for orbits in the exterior and interior of the 
channel, estimated using $3000\times 3000$ pixel image with a cut-off $\xi = 50$.}
{LyapunovVsEps}{4in}

To estimate the exponent separately for orbits that lie in the channel $E$ and orbits
that lie in its exterior, $H = \bT^2 \setminus E$, we use the $N\times N$-pixel
approximation of the channel, $E_{N\times N}(\xi)$, recall \Fig{PixelChannel}.
Initial conditions for the exterior computation
are chosen in each pixel of $H_{N \times N}(\xi) = \bT^2 \setminus E_{N\times N}(\xi)$, and
each is iterated only over the time that it remains in $H_{N\times N}(\xi)$: the time $T$ in 
\Eq{FTLE} is chosen so that the orbit segment from $0$ to $T$ lies in $H$
Similarly, an in-channel, Lyapunov exponent can be computed by choosing initial conditions in
$E_{N\times N}(\xi)$, iterating them only as long as they remain in the approximate channel.
The resulting finite-time Lyapunov exponents are shown, for $\eps = 0.1$,
as a function of their initial condition in \Fig{LyapunovChannel} using a channel cut-off of $
\xi = 50$. The mean exponent for initial conditions in $H$ is $\langle \lambda_{0.1} \rangle_H 
= 0.8352$, while $\langle \lambda_{0.1} \rangle_E = 0.5123$.

The mean exponents in $H$ and $E$ are also shown in \Fig{LyapunovVsEps} as a function of $\eps
$. Since these computations are for a fixed number of pixels, $N = 3000$, the approximation 
$H_{N\times N}$ will vanish for large enough $\eps$. Since orbits leave this gridded 
approximation rapidly, we do not show these curves for $\eps > 1.6$.
Note that the exponent for initial conditions in $H$ is a monotonically decreasing function of $\eps$, while that for $E$ primarily increases. It appears that the minimum of the globally averaged exponent (dashed curve in the figure) corresponds to the point at which the channel area reaches $\cO(1)$ so that the essentially all orbits are in the channel. In principle, the globally averaged exponent should be the weighted average of the channel and exterior results---but this is not true for the figure. The reason is that the computations are carried out over different time intervals. The latter two are averages over the shorter time during which orbit segments remain in $E$ or in $H$. We observe that the value of \Eq{FTLE} increases with $T$; the result is that both $\langle \lambda_{0.1} \rangle_E$ and $\langle \lambda_{0.1} \rangle_H$ are smaller than those of the global average, which used $T=10^4$.

\InsertFig{LyapunovChannel}{Distribution of the finite-time Lyapunov exponents
\Eq{FTLE} for for $\eps = 0.1$. Panel (a) shows initial conditions in the channel and (b) in
the exterior of the channel. The channel is computed up to the cut-off $\xi = 50$, and
discretized onto a $3000\times 3000$ grid.}{LyapunovChannel}{7.5in}

\section{Conclusions}\label{sec:Conclusions}
We have provided numerical evidence for the three conjectures of \Sec{Introduction} for a
family of parabolic standard maps $f_\eps: \bT^2 \to \bT^2$, \Eq{StdMap} with force
\Eq{Force}, that are homotopic to the Anosov map \Eq{Anosov}, but which have a pair of fixed
points for each $\eps >0$, one hyperbolic and one parabolic. We showed that the
right-going stable and unstable manifolds of these fixed points bound a channel
$E \subset \bT^2$.
The lift $\tilde{E}$ of the channel to the plane has unstable boundaries that are asymptotic
to lines of slope $\phi^{-1}$, the slope of the unstable manifolds of the Anosov map. Since
these maps are, in addition, reversible, the same assertion concerning the slope is valid for
stable manifolds. The height of the lifted channel approaches zero as $x^{-1}$, which is the
maximal rate consistent with area-preservation.
\begin{itemize}
	\item We have computed the area $A_\eps(E)$ for the lift using the action, and on the
torus using pixel-based computations. We show that $A_\eps(E) < 1$ when $\eps < 1$. We
conjecture that there is a transition near $\eps = 1$ where the measure of the channel reaches
one.	
	\item We have found elliptic periodic orbits in the channel for several values of $\eps$.
These are formed through saddle-center bifurcations near tangencies of the stable and unstable
manifolds of the hyperbolic point, i.e., by the Newhouse mechanism. We conjecture that there
are elliptic orbits in the channel for arbitrarily small, positive $\eps$, and that there are
no elliptic orbits in its complement, $H$.
	\item We have computed finite-time Lyapunov exponents for orbit segments both in the
channel $E$ and in its complement, $H$. As $\eps \to 0^+$ it appears that the former
monotonically decrease, while the latter limit to the exponent of the almost hyperbolic map
$f_0$. This occurs even though a naive numerical iteration of any given initial condition
appears to fill every pixel of a computed image.
	\end{itemize}
We hope that these results will present convincing arguments in favor of
the hypothesis that a generic, sufficiently smooth symplectic diffeomorphism does
have a positive measure invariant set where its Lyapunov exponent is
positive and that is it non-uniformly hyperbolic on this set. This would show
the drastic difference between properties of sufficiently smooth and
$C^1$-smooth symplectic diffeomorphisms where a generic case is zero
Lyapunov exponent almost everywhere with respect to the Lebesgue measure \cite{Bochi02}.
\pagebreak
\section*{Appendices}
\appendix

\section{Parabolic Manifolds}\label{app:ParabolicManifolds}

As shown by \cite{Fontich99}, a map of the form \Eq{StdMap}, with a parabolic fixed point at $
(x_p,0)$ such that $g(x-x_p) = \cO((x-x_p)^2)$ has
a pair of stable and unstable manifolds. In the neighborhood of the fixed point, these can be
parametrically represented as
\beq{ParabolicApprox}
	W^u(p) = \begin{pmatrix} x_p\\0\end{pmatrix} +
	  \begin{pmatrix} s^2 \\ \alpha_3 s^3+ \alpha_4 s^4 + \alpha_5 s^5 + \ldots
\end{pmatrix}
\eeq
under the assumption that the dynamics on the manifold is parameterized by the one-dimensional
map $\sigma: \bR \to \bR$,
\[
	s \mapsto \sigma(s) = s + \beta_2 s^2 + \beta_3 s^3 + \ldots
\]
Demanding that this set be invariant gives a set of equations that can be solved, 
order-by-order, for the coefficients $\{\alpha_i, \beta_j\}$. For the case \Eq{Force}, the 
result is
\bsplit{ParabolicManifold}
  \alpha_3 &= \sqrt{\frac{2\pi k}{3}}, \quad
 &\alpha_4 = -\frac{\pi k}{2}, \quad
 &\alpha_5 = \sqrt{\frac{6 \pi^3}{k}} \frac{12 + 11 k^2}{144},\\
  \beta_2 &= \sqrt{\frac{\pi k}{6}}, \quad
 &\beta_3 = \frac{\pi k}{6}, \quad
 &\beta_4 = \sqrt{\frac{6 \pi^3}{k}} \frac{4 + k^2}{96},
\esplit
where $k \equiv \sqrt{\eps(2+\eps)}$. These expansions are well-defined only for $k \neq 0$,
requiring $\eps \neq 0$. Note that since $\alpha_3 > 0$, $W^{s,u}(p)$ has the form of a cubic
cusp.

This expansion, while useful for small $s$, does not give a good representation too far from
the fixed point. For example, the degree-$10$ polynomial approximations are compared with the
numerically generated manifolds of $(x_p,0)$ in \Fig{ParabolicManifolds}.

\section{Creation of elliptic points from tangency of fixed point sets}\label{app:creation_elliptic}

In this appendix we present a justification of the method of finding elliptic points used in 
\Sec{EllipticDynamics}. We consider only the case of an $S$-reversible area-preserving map, $f \circ S = S\circ f^{-1}$ for which the involution $S$ has a smooth line of fixed points, $\Fix{S}$.

\begin{thm} Suppose that $f$ is a $C^2$ area-preserving
diffeomorphism that is reversible w.r.t. a smooth involution $S$, and
the set $\Fix{S}$ of the involution fixed points is a smooth curve. Then if
$\xi = \Fix{S} \cap f^p(\Fix{S})$ is a point of transversal
intersection, it is a point on either an elliptic or a hyperbolic period-$2p$ orbit,
while if $\xi$ is a point of quadratic tangency,
it is a parabolic period-$2p$ orbit.
\end{thm}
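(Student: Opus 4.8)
The plan is to exploit the symmetry structure imposed by reversibility to build the periodic orbit explicitly and then read off its type from the eigenvalues of the linearization, using the relation between transversality of the fixed-point sets and the trace of the derivative. First I would invoke the Devaney theorem quoted earlier: if $\xi \in \Fix{S} \cap f^p(\Fix{S})$, then $\xi$ lies on a symmetric periodic orbit $\cO(\xi)$ of period $2p$. The key observation is that such a symmetric orbit intersects $\Fix{S}$ in (at least) two points, namely $\xi$ and $f^p(\xi)$, and these are precisely the points fixed by the two involutions $S$ and $f^p \circ S$ restricted to the orbit. I would write $M = Df^{2p}(\xi)$ for the linearization of the return map. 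Since $f$ is area-preserving, $\det M = 1$, so $M \in SL(2,\bR)$ and the orbit is elliptic, hyperbolic, or parabolic according as $|\Tr M| < 2$, $|\Tr M| > 2$, or $|\Tr M| = 2$.

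The heart of the argument is a factorization of $M$ through the involution. Differentiating $f \circ S = S \circ f^{-1}$ and iterating gives $f^p \circ S = S \circ f^{-p}$, so $DS\, Df^{-p} = Df^p(\cdot)\, DS$ evaluated along the orbit; unwinding this at the symmetric point shows that $M = Df^{2p}(\xi)$ can be written as a product of the form $R_2 R_1$ where each $R_i$ is the linearization of an (orientation-reversing, since $DS$ has determinant $-1$ on the reversible-map convention used here) involution of the plane, i.e. a reflection. Concretely, $Df^p(\xi)$ conjugates $DS$ at $\xi$ to $DS$ at $f^p(\xi)$, and the composition of the two reflections across the tangent lines of $\Fix{S}$ and $f^p(\Fix{S})$ at $\xi$ governs $M$. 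The crucial point is that the product of two planar reflections is a rotation (elliptic), a hyperbolic rotation (hyperbolic), or a shear/identity (parabolic) exactly according to whether the two reflection axes meet transversally at a nonzero angle, meet transversally but the axes are "hyperbolically" disposed, or are tangent. I would make this precise by noting that $\Tr M$ is a smooth function of the angle $\theta$ between the tangent line to $\Fix{S}$ and the tangent line to $f^p(\Fix{S})$ at $\xi$, with $|\Tr M| = 2$ exactly when $\theta = 0$, i.e. when the two curves are tangent.

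The remaining step is to connect transversality versus quadratic tangency of the two curves $\Fix{S}$ and $f^p(\Fix{S})$ to the three cases. Transversal intersection means the tangent lines are distinct ($\theta \ne 0$), which forces $|\Tr M| \ne 2$ and hence an elliptic or hyperbolic orbit; which of the two occurs depends on the relative orientation of the reflection axes, and both can arise (indeed a saddle-center bifurcation produces one of each). A quadratic tangency means the tangent lines coincide but the curves separate to second order, giving $\theta = 0$ so $|\Tr M| = 2$, hence a parabolic orbit; the ``quadratic'' (as opposed to higher-order) nature of the tangency is exactly what guarantees the nondegeneracy needed for this to be a generic saddle-center bifurcation point under parameter variation. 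I would verify the factorization carefully using the chain rule along the orbit together with $\det DS = -1$.

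The main obstacle I anticipate is establishing the factorization $M = R_2 R_1$ cleanly and keeping track of orientations. The subtle point is that $S$ is orientation-reversing on $\bR^2$ (its fixed set being a curve), so $DS$ at a point of $\Fix{S}$ is a genuine reflection; one must check that $f^p \circ S$ is likewise an involution whose derivative at $\xi$ is a reflection across the tangent line of $f^p(\Fix{S})$, and that the composition of these two reflections is precisely $Df^{2p}(\xi)$ rather than its inverse or transpose. Once the reflection picture is in place, the classification is essentially the elementary fact that a product of two reflections in the plane is a rotation through twice the angle between their axes, and the rest follows by examining $\Tr M = 2\cos(2\theta)$-type behavior near $\theta = 0$. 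I would be careful that the map is only $C^2$, so I can use second-order Taylor data at the tangency but not more; this is exactly enough to distinguish a quadratic tangency and to obtain the parabolic conclusion.
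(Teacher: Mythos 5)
Your factorization idea is sound, and it is essentially the coordinate-free version of what the paper does: since $S(\xi)=\xi$ and $f^{2p}(\xi)=\xi$, the map $T=f^{2p}\circ S$ is an involution fixing $\xi$ whose fixed curve contains $f^{p}(\Fix{S})$, so $M=Df^{2p}(\xi)=R_2R_1$ with $R_1=DS(\xi)$ and $R_2=DT(\xi)$ linear involutions of determinant $-1$ whose $+1$ eigenspaces are the tangents to $\Fix{S}$ and to $f^{p}(\Fix{S})$ at $\xi$. The tangency half of your argument is also fine at the linear level: if the two axes coincide, $M$ fixes that line pointwise, and $\det M=1$ then forces a double eigenvalue $1$, i.e.\ a shear or the identity.

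The gap is in the transverse case. A linear involution with determinant $-1$ is not an orthogonal reflection: it is determined by its $+1$ axis \emph{and} its $-1$ direction, which need not be perpendicular, and there is in general no coordinate system making $R_1$ and $R_2$ orthogonal simultaneously (if there were, their product would always be a Euclidean rotation, which is false). Hence the product is not classified by the angle between the axes, and the formula $\Tr M=2\cos(2\theta)$ does not hold. Concretely, take
\[
R_1=\begin{pmatrix}1&0\\0&-1\end{pmatrix},\qquad
R_2=\begin{pmatrix}1&0\\c&-1\end{pmatrix},\quad c\neq 0 ;
\]
both are involutions with determinant $-1$, the axis of $R_2$ is spanned by $(2,c)$ and is transverse to the axis of $R_1$, yet $M=R_2R_1=\begin{pmatrix}1&0\\c&1\end{pmatrix}$ is a shear, i.e.\ parabolic. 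This configuration is compatible with reversibility ($R_1MR_1=M^{-1}$), so transversality of the two curves alone does \emph{not} force $|\Tr M|\neq 2$, and your key deduction fails. The paper avoids this by computing: in Bochner--Montgomery symplectic charts where $S$ is $(x,y)\mapsto(x,-y)$, reversibility gives $B=\Sigma A^{-1}\Sigma$ (with $\Sigma=\mathrm{diag}(1,-1)$) for the linear parts of the two half-maps, whence $\Tr Df^{2p}(\xi)=2\pm 4\,a_{12}a_{21}$ (sign depending on conventions). Transversality is only $a_{21}\neq 0$; the type of the orbit—elliptic, orientable saddle, non-orientable saddle, or the degenerate shear case $a_{12}=0$—is decided by the product $a_{12}a_{21}$, a quantity the angle between the axes cannot detect. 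Repairing your proof therefore requires carrying out exactly this trace computation with both pieces of data of $R_1$ and $R_2$. Finally, in the tangency case the paper also verifies a nonlinear nondegeneracy condition (quadratic tangency implies $\partial^2 q/\partial x^2=2\alpha\,\partial^2 G_1/\partial x^2\neq 0$ in the normal form of $f^{2p}$), so that the point is a genuine parabolic point and not something more degenerate; your sketch asserts this but does not prove it, and it does not follow from the linear picture alone.
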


\begin{proof}
Since $\xi \in \Fix{S}\cap f^p(\Fix{S})$, then $\xi = S(\xi)$ and there is a point $\eta \in 
\Fix{S}$ such that $f^p(\eta)= \xi$.
Consider first $p=1$. Then we have $f^{2}(\eta)= f(f(\eta))=$ $f(\xi)= f(S(\xi))= S(f^{-1}
(\xi))$ $=S(\eta)=\eta$. Similarly, one has $f^2(\xi)=\xi$. By induction, the same is true for any $p\in \bZ$. Below we work with $p=1$ to facilitate calculations.

According to the Bochner-Montgomery theorem \cite{Bochner46} we can take two symplectic 
charts: $\cV$ near $\eta$ with coordinates $(x,y)$ and $\cU$ near $\xi$ with coordinates $
(u,v)$ such that in $\cV$ the involution $S$ becomes $S(x,y)= (x,-y)$, and similarly in $\cU$ 
it becomes $S(u,v)= (u,-v)$. Moreover, $f|_{\cV}=f_1: \cV \to \cU$ is written as follows (we 
assume with no loss of generality that $\xi$ and $\eta$ have zero coordinates in the related 
charts)
\[
	\begin{pmatrix} u \\ v \end{pmatrix} =
		A \begin{pmatrix} x \\ y \end{pmatrix} +
			\begin{pmatrix} F_1(x,y) \\ G_1(x,y) \end{pmatrix}
\]
where $A$ is a constant matrix and $F_1$ and $G_1$ are $\cO(2)$.
Similarly $f|_{\cU} = f_2: \cU \to \cV$ has the form
\[
	\begin{pmatrix} x \\ y \end{pmatrix} =
		B\begin{pmatrix} u \\ v \end{pmatrix} +
			\begin{pmatrix} F_2(u,v) \\ G_2(u,v) \end{pmatrix} .
\]
Note that in both cases, $du \wedge dv = dx \wedge dy$ by area preservation.

If $\xi$ is the point of transverse intersection of $f_1(\Fix{S})$ and
$\Fix{S}$, then two vectors $(a_{11},a_{21})^\top$ and $(1,0)^\top$ are
transverse, i.e., $a_{21}\ne 0$. In this case, when $0<a_{12}a_{21}<1$, the point $\eta$ is
elliptic (its eigenvalues satisfy $|\lambda_{1,2}|=1$), while if $a_{12}a_{21} < 0$ it is an 
orientable saddle, and if $a_{12}a_{21} >1$ it is a non-orientable saddle.

The tangency of $Df_1(\Fix{S})$ and $\Fix{S}$ at $\xi$ implies $a_{21}=0$ and
area preservation gives $a_{22}=a^{-1}_{11}$. The reversibility written in both
coordinate charts provides the following relations for direct and inverse maps
$f_1\circ S = S\circ f_2^{-1}$, $f_2\circ S = S\circ f_1^{-1}$, or in
coordinate form:
\[
	f_1^{-1}: \begin{pmatrix} x \\ y \end{pmatrix} =
		\begin{pmatrix} a_{22} & -a_{12} \\ 0 & a_{11} \end{pmatrix}
			\begin{pmatrix} u \\ v \end{pmatrix} +
		  \begin{pmatrix} F_2(u,-v) \\ - G_2(u,-v) \end{pmatrix},
\]
and
\[
	 f_2^{-1}: \begin{pmatrix} u \\ v \end{pmatrix} =
		\begin{pmatrix} a_{11} & a_{12} \\ a_{21} & a_{22} \end{pmatrix}
			\begin{pmatrix} x \\ y \end{pmatrix} +
		  \begin{pmatrix} F_2(x,y) \\ G_2(x,y) \end{pmatrix},	
\]
from where we get relations: $a_{11}=b_{22}$, $a_{12}=b_{12}$, $a_{22}=b_{11}$, $b_{21}=0$,
$U_2(x,y)=F_1(x,-y)$, $V_2(x,y)=-G_1(x,-y)$, $U_1(u,v)=F_2(u,-v)$,
$V_1(u,v)=-G_2(u,-v)$, here $U_1, V_1$, $U_2, V_2$ are nonlinear terms of
the inverse maps $f_1^{-1}, f_2^{-1}$. Denote below for brevity
$a_{11}=\alpha$, $a_{12}=\beta$, then $a_{22}=\alpha^{-1}$.

The quadratic tangency of $f_1(\Fix{S})$ and
$\Fix{S}$ at $\xi$ implies $\partial^2 G_1/\partial x^2 \ne 0$ at $(0,0)$.
The map $f^2$ near a 2-periodic point $\eta$ has the form $f_2\circ f_1$.
Hence, the linear part of this map has the matrix
\[
\begin{pmatrix}1&2\beta/\alpha \\ 0&1\end{pmatrix},\; \gamma = 2\beta/\alpha \ne 0.
\]
Let us notice that for the map $f^2$ near the point $\eta$ to guarantee
its fixed point be parabolic (not more higher degeneration) we
need only to check that in the local coordinates
\[
x_1 = x+\gamma y + p(x,y),\; y_1 = y + q(x,y),\;dx_1\wedge dy_1 = dx\wedge dy
\]
the inequality $\partial^2 q/\partial x^2 \ne 0$ at the fixed point holds.
For our case this quantity is the following
\[
\frac{\partial^2 q}{\partial x^2}(0,0)=\alpha \frac{\partial^2 G_1}{\partial x^2}(0,0)-
\alpha^2 \frac{\partial^2 V_1}{\partial u^2}(0,0).
\]
From identities derived from the representation for $f_1$ and $f_2 = S\circ f_1^{-1}\circ
S$ we get
\[
\frac{\partial^2 V}{\partial u^2}(0,0)=-\frac{1}{\alpha}\frac{\partial^2 G_1}{\partial x^2}
(0,0),
\]
therefore we come to
\[
\frac{\partial^2 q}{\partial x^2}(0,0)=\alpha \frac{\partial^2 G_1}{\partial x^2}(0,0)-
\alpha^2 \frac{\partial^2 V_1}{\partial u^2}(0,0)=2\alpha
\frac{\partial^2 G_1}{\partial x^2}(0,0)\ne 0
\]
due to the quadratic tangency of $\Fix{S}$ and $f(\Fix{S})$ at $\xi$.
\end{proof}

\section{Orbit Bounds}\label{app:UnBounded}
In this appendix, we obtain a sufficient condition for the forward orbit of a point under the 
lift $F$ of the map \Eq{StdMap} to be unbounded. This condition is used in the proof of \Th{Slope}.

Write the lift as
\[
	\TwoVec{x_{t+1}}{y_{t+1}} =
	 A \TwoVec{x_t}{y_t} + \hat g(x_t) \TwoVec{1}{1} ,
\]
where $A$ is the matrix in \Eq{Anosov}, and $\hat g(x+1) = \hat g(x)$.
The formal solution to this iteration is
\beq{formalSol}
	\TwoVec{x_t}{y_t} = A^t \TwoVec{x_0}{y_0} + \sum_{j=0}^{t-1} \hat g(x_{t-1-k}) A^j \TwoVec{1}{1} .
\eeq
The $t^{th}$ power of the Anosov matrix \Eq{Anosov} is easily computed in terms of
the Fibonacci sequence,
\beq{Fibonacci}
	F_{t+1} = F_{t} + F_{t-1}, \quad F_{-2} = 1, \quad F_{-1} = 0 ,
\eeq
to obtain
\[
	A^t = \begin{pmatrix} F_{2t} & F_{2t-1} \\ F_{2t-1} & F_{2t-2} \end{pmatrix} .
\]
Thus \Eq{formalSol} becomes
\[
	\TwoVec{x_t}{y_t} = A^t \TwoVec{x_0}{y_0} + \sum_{j=0}^{t-1} \hat g(x_{t-1-j}) \TwoVec{F_{2j+1}}{F_{2j}} .
\]	
Supposing that $\|\hat g(x)\|_\infty = G$, we can find a lower bound on the orbit as
\begin{align*}
		x_t &\ge F_{2t} x_0 + F_{2t-1}y_0 - G \sum_{j=0}^{t-1} F_{2j+1} ,\\
		y_t &\ge F_{2t-1} x_0 + F_{2t-2}y_0 - G \sum_{j=0}^{t-1} F_{2j} .
\end{align*}

The solution to the Fibonacci difference equation \Eq{Fibonacci} is
\[
	F_t = \frac{\phi+2}{5} \left[ \phi^t + (-\phi)^{-t-2}\right] \ge \frac{\phi+2}{5}(\phi^t-1) ,
\]
where $\phi$ is the golden mean. Thus
\begin{align*}
	\sum_{j=0}^{t-1} F_{2j+1} &= \frac{\phi+2}{5}\left[ \phi^{2t}-1 +\phi^{-2}(\phi^{-2t}-1) \right] \le \frac{\phi+2}{5} \phi^{2t} ,\\
	\sum_{j=0}^{t-1} F_{2j} &= \frac{\phi+2}{5}\left[\phi^{2t-1}- \phi^{-2t-1}\right]
	   \le \frac{\phi+2}{5} \phi^{2t-1} .
\end{align*}
Consequently if $x_0, y_0 \ge 0$, then
\begin{align*}
		x_t &\ge \frac{\phi+2}{5}\left [(\phi^{2t}(x_0 + \phi^{-1} y_0 - G) -x_0 -y_0 \right], \\
		y_t &\ge \frac{\phi+2}{5}\left[(\phi^{2t-1}(x_0 + \phi^{-1}y_0 - G) -x_0 -y_0\right] .
\end{align*}
Therefore, whenever
\beq{Initial bounds}
  x_0 + \phi^{-1} y_0 > G, \quad x_0,y_0 > 0 ,
\eeq
then we have $x_t, y_t \to \infty$ as $t \to \infty$.

For the form \Eq{Force} with $\mu = \mu_p(\eps)$ from \Eq{Mu_p}, the sup-norm
of $\hat g$ is
\[
	G = \tfrac{1}{2\pi}[\sqrt{\eps(1+\eps)} + 1+\eps - \sec^{-1}(1+\eps)]
	 \le \tfrac{1}{2\pi}(2\eps +1) .
\]
Thus the forward orbit of a point $(x_0,y_0)$ in the positive quadrant that satisfies
\beq{FinalBounds}
  x_0 + \phi^{-1} y_0 > \tfrac{1}{2\pi}(2\eps +1)
\eeq
is unbounded.

\section{Actions and Areas}\label{app:Action}
Areas bounded by segments of invariant manifolds of an exact, area-preserving map $F:\bR^2\to
\bR^2$ can be computed using the action-flux formulas of MacKay, Meiss, and Percival
\cite{MMP84, MMP87, Meiss92}. In particular, suppose that $F$ preserves the area form $\omega
$,
i.e., $F^*\omega = \omega$, and $\omega = -d\nu$ is an exact form. We say that $F$ is exact,
area-preserving when there exists a zero-form $L:M \to \bR$ such that
\beq{Exact}
	F^*\nu - \nu = dL
\eeq
In particular, the lift of \Eq{StdMap} is exact symplectic with form $\omega = dx \wedge dy$
with the Lagrangian
\beq{Lagrangian}
	L(x,y) = \frac12 (y+g(x))^2 + G(x)
\eeq
where $G$ is any anti-derivative of $g$.

Suppose that $z^* = (x^*,y^*)$ is a hyperbolic or parabolic fixed point of $F$ and
$\cU = W^u(z^*,\zeta)$ is the segment of the right-going unstable manifold between $z^*$
and the point $\zeta = (\xi,\eta) \in W^u(z^*)$. Let $ \zeta_t = F^{t}(\zeta)$
denote points on the orbit of $\zeta = \zeta_0$, so that $\zeta_t \to z^*$ as $t \to -\infty$.

Consider the region $\cR$ ``below" the segment $\cU$ and above $x$-axis, as sketched in
\Fig{ChannelsE10}. This region is bounded by the loop
\[
	\partial \cR = \{ (x,0): x^*\le x \le \xi\} + \{(\xi,y): 0\le y \le \eta\} -\cU
	 - \{(x^*,y): 0<y<y^*\} .
\]
The area of $\cR$ is
\[
	 A^u = \int_{\cR} \omega = -\oint_{\partial \cR} \nu = \int_{\cU} \nu ,
\]
upon doing the trivial integrals along the straight segments of $\partial \cR$. The remaining
integral of the one-form $\nu$ along the segment $\cU$ can be done using \Eq{Exact},
recursion,
and the fact that $F^{-t}(\cU) \to z^*$:
\bsplit{Action}	
	 \int_{\cU} \nu &= \int_{F^{-1}(\cU)} dL + \int_{F^{-1}(\cU)} \nu
	 = L(\zeta_{-1}) -L(z^*) + \int_{F^{-1}(\cU)} \nu\\
	 &= \sum_{t=-\infty}^{-1} ( L(\zeta_t)-L(z^*))
	 \equiv \Delta \cA^-(\zeta,z^*) ,
\esplit
the difference between the past actions of the two orbits.

For the map \Eq{StdMap}, the upper half of the channel $\tilde E^+(\xi)$ has boundary
\Eq{ChannelBoundary}. Since the fixed points have $y^* = 0$, the channel area is the
difference
between the areas below the hyperbolic manifold and that below the parabolic manifold, as
given
by \Eq{Action}:
\[
	A^+(\xi) = Area(E^+(\xi)) = \Delta \cA^-(\zeta_h,h) - \Delta \cA^-(\zeta_p,p).
\]
Areas computed using this formula for the map \Eq{StdMap} are shown in
\Fig{LiftedChannelArea}.
\InsertFig{ChannelsE10}{Areas below the initial segments of the right-going unstable manifolds
of $(x_h,0)$ (red), and $(x_p,0)$
(green checkerboard) for $\eps = 1.0$.}{ChannelsE10}{3in}

For a symmetric fixed point $z^*$, the reversor, \Eq{Reversor}, maps
$S(W^u(z^*)) = W^s(z^*)$. The image of the channel $\tilde E^+$ is bounded by the curve
\[
	S(\partial \tilde E^+) = \{ (x,0): x_h\le x \le \xi\} + W^s(p,S(\zeta_p))
	 + \{(\xi-y,y): -\eta_p \ge y \ge -\eta_h\} -W^s(h, S(\zeta_h))
\]
Note that the reflected channel has a cut-off that is a line segment with slope minus one.
Now since $S$ is area-preserving, but orientation reversing, the area of the stable channel
$\tilde E^- = S(\tilde E^+)$ is
\[
	\int_{\tilde E^-} \omega = \oint_{\partial{\tilde E^-}} \nu
	  = -\oint_{S(\partial \tilde E^+)} \nu = -\int_{\tilde E^+} \omega .
\]
Thus, up to the sign, the areas are the same.

\bibliographystyle{alpha}
\bibliography{Parabolic}

\end{document}